\newcommand{\E}{\textnormal{E}}
\newcommand{\tr}{\textnormal{tr}}
\newcommand{\by}{\bm{y}}
\newcommand{\be}{\bm{e}}
\newcommand{\bS}{\bm{S}}
\newcommand{\bG}{\bm{G}}
\newcommand{\bW}{\bm{W}}
\newcommand{\bV}{\bm{V}}
\newcommand{\bJ}{\bm{J}}
\newcommand{\bU}{\bm{U}}
\newcommand{\bY}{\bm{Y}}
\newcommand{\bX}{\bm{X}}
\newcommand{\bOmega}{\bm{\Omega}}
\newcommand{\bI}{\bm{I}}
\newcommand{\bb}{\bm{b}}
\newcommand{\bB}{\bm{B}}
\newcommand{\bbeta}{\bm{\beta}}
\newcommand{\bSigma}{\bm{\Sigma}}
\newcommand*{\minidx}{\operatornamewithlimits{min}\limits}
\definecolor{lblue1}{RGB}{160,160,255}
\definecolor{lblue2}{RGB}{210,210,255}
\definecolor{lblue}{RGB}{100,100,255}
\definecolor{lred1}{RGB}{255,160,160}
\definecolor{lred2}{RGB}{255,210,210}
\definecolor{lred3}{RGB}{255,240,240}
\definecolor{lred}{RGB}{255,100,100}
\definecolor{lpink}{RGB}{183,110,183}
\newtheorem{proposition}{Proposition}
\newtheorem{theorem}{Theorem}
\newtheorem{remark}{Remark}
\newcommandx{\shani}[2][1=]{\todo[linecolor=blue,backgroundcolor=blue!25,bordercolor=blue,#1]{#2}}
\newcommandx{\rob}[2][1=]{\todo[linecolor=red,backgroundcolor=red!25,bordercolor=red,#1]{#2}}
\title{Properties of point forecast reconciliation approaches}
\author{Shanika L Wickramasuriya}
\begin{document}
\titlepage

\begin{abstract}
Point forecast reconciliation of collection of time series with linear aggregation constraints has evolved substantially over the last decade. A few commonly used methods are GLS (generalized least squares), OLS (ordinary least squares), WLS (weighted least squares), and MinT (minimum trace). GLS and MinT have similar mathematical expressions, but they differ by the covariance matrix used. OLS and WLS can be considered as special cases of MinT where they differ by the assumptions made about the structure of the covariance matrix. All these methods ensure that the reconciled forecasts are unbiased, provided that the base forecasts are unbiased. The ERM (empirical risk minimizer) approach was proposed to relax the assumption of unbiasedness.

This paper proves that \begin{inparaenum}[(a)] \item GLS and MinT reduce to the same solution; \item on average, a method similar to ERM (which we refer to as MinT-U) can produce better forecasts than MinT (lowest total mean squared error) which is then followed by OLS and then by base; and \item the mean squared error of each series in the structure for MinT-U is smaller than that for MinT which is then followed by that for either OLS or base forecasts\end{inparaenum}. We show these theoretical results using a set of simulation studies. We also evaluate them using the Australian domestic tourism data set.
\end{abstract}

\begin{keywords}
Coherent; Forecast reconciliation; Hierarchical time series; Point forecasts; Projections; Unbiased/biased forecasts
\end{keywords}

\newpage

\section{Introduction}
Multivariate time series connected via a set of aggregation constraints is known as a grouped time series. If the grouping of the series leads to a unique structure, we call it a hierarchical time series. For example, the total number of students enrolled into a university for a particular year can be disaggregated by faculty/school, then by the department, down to course level, forms a hierarchical time series. On the other hand, sales data can be disaggregated by the geographic areas and then by the product category or vice-versa. If there is no preference for one disaggregation over the other, we can combine these structures to form a grouped time series. The applications of hierarchical or grouped time series arise in various disciplines: retail \citep{pendal17, karmal19}, energy \citep{jeoetal19, benetal20}, tourism \citep{athetal09, beretal20}, labor market \citep{hynetal16} and economics \citep{athetal19} are among others.

Forecasting these structures are challenging and need careful consideration due to several reasons: \begin{enumerate*}[label=(\alph*)]
	\item forecasts of each series need to be accurate and coherent (i.e., forecasts satisfy the same aggregation constraints as the data) to ensure aligned decision making;
	\item predictive distribution of the coherent forecasts is needed to capture the uncertainty present;
	\item should provide coherent forecasts within a reasonable time for large structures.
\end{enumerate*}

While overcoming the first challenge, there are two commonly used approaches in the literature: bottom-up and top-down. The bottom-up approach forecasts only the most disaggregated series and sums them appropriately to obtain the forecasts for each series in the aggregated levels \citep[see][among others]{orcetal68, dunetal76, shlwol79, pendal17, beretal20}. The top-down approach forecasts the completely aggregated time series and then disaggregates this forecast based on some proportions to form the forecasts for the disaggregated series \citep[see][among others]{grosoh90, athetal09, parnas14}.

\citet{hynetal11} proposed another method based on a linear regression model. This approach forecasts all the series in the structure independently (we refer to these as base forecasts) and model them as the sum of the unknown expectations of the future values of the most disaggregated series and an error term. If the base forecasts are unbiased and the variance covariance matrix of the error is known, then the generalized least squares (GLS) estimator of the expected values of the most disaggregated series gives the minimum variance unbiased estimator. In practice, the variance covariance matrix is not readily available. \citet{hynetal11} overcame this issue by computing the reconciled forecasts using the ordinary least squares (OLS) estimator. \citet{hynetal16} suggested using weighted least squares (WLS) estimator to improve the performance of the reconciled forecasts. Assuming that the base forecasts are unbiased, \citet{wicetal19} proposed another method for forecast reconciliation by minimizing the trace of the reconciled forecast error covariance matrix, which is widely known as the MinT approach. They have illustrated that OLS and WLS are special cases of MinT when assumptions are placed on the variance covariance matrix of the base forecast errors. \citet{panetal20} provided a geometrical interpretation about these reconciliation methods by nesting them within the class of projections.

Relaxing the assumption of unbiasedness of base forecasts, \citet{vancug15} introduced a method called game-theoretically optimal reconciliation. The main idea of this method is to choose a set of reconciled forecasts that guarantees the total weighted quadratic loss of the reconciled forecasts is always smaller than that of the initial forecasts. In general, this method does not have a closed-form solution and is computed as a constrained quadratic programming problem using the general purpose optimization software. Hence this method can be problematic when dealing with large structures that we encounter in practice. \citet{benkoo19} proposed a method that seeks a set of reconciled forecasts with the best trade-off between bias and forecast error variance. They also implemented a regularization method to handle large structures. The simulation and empirical results of this method illustrated that it is a competitive method to the existing methods in the literature. \citet{panetal20} suggested that rather than finding linear mappings that are not projections, it may be appropriate to bias-correct the base forecasts. Through empirical results, they found that even when the form of bias-correction is failed, its impact can be mitigated by forecast reconciliation.

Until recently, there had been no effort devoted to address the second challenge of forecasting hierarchical or grouped time series. \citet{benetal20} proposed a method to compute coherent probabilistic forecasts in a bottom-up fashion using a set of permutations derived from empirical copulas. \citet{gam20} extended the point forecast reconciliation to probabilistic forecast reconciliation using linear transformations. They provided conditions under which the linear transformation is a projection and favored an oblique projection similar to that of the MinT approach. Rather than using existing projection matrices for point forecast reconciliation, \citet{panetal20b} proposed to optimize either the energy or variogram score to find the reconciliation weights.

To the best of our knowledge, the third challenge has been fulfilled by the MinT (OLS or WLS) reconciliation approach. The simulation results of \citet{wic17} observed that a structure with nearly 5.5 million time series could be reconciled in less than 10 minutes.

The main contributions of this paper are to prove that
\begin{enumerate}[label = (\alph*)]
	\item GLS and MinT lead to the same projection matrix, even though the two methods minimize different loss functions and have two similar form analytical expressions involving different covariance matrices;
	\item On average, a method similar to \citet{benkoo19} (which we refer to as MinT-U later on) can produce better forecasts than MinT (lowest total mean squared error (MSE)) which is then followed by OLS and then by base forecasts;
	\item the mean squared error of each series in the structure for MinT-U is smaller than that for MinT, which is then followed by that for either OLS or base forecasts.
\end{enumerate}

An interesting set of properties for probabilistic forecast reconciliation can be derived under the Gaussian framework. The theory, simulation and empirical results are presented in \citet{wic2021b}.

The rest of the paper is structured as follows. Section~\ref{sec:priliminaries} presents the notations and a review of point forecast reconciliation methods. In Section~\ref{sec:newtheory}, we introduce the theoretical derivations of the statements mentioned above. Section~\ref{sec:simulations} and \ref{sec:application} show the results from simulations and Australian domestic tourism data set, respectively. Section~\ref{sec:conclusion} conclude with a short discussion of possible future research.

\section{Preliminaries}
\label{sec:priliminaries}
\subsection{Notation}

Let $\by_t \in \mathbb{R}^m$ be a vector of all observations collected at time $t$ from each series in the structure, and $\bb_t \in \mathbb{R}^n$ be a vector formed only using the observations collected at time $t$ from the most disaggregated level. These are connected via
\begin{align}
	\label{eq:obsstr}
	\by_t = \bS\bb_t,
\end{align}
where $\bS$ is of order $m \times n$ which consists of aggregation constraints present in the structure. To clarify these notations and relationships more clearly, consider the structure given in Figure~\ref{fig:tree}. Let's define a generic series within the structure as $X$, with $y_{X, t}$ denoting the value of series $X$ at time $t$ and $y_t$ being the aggregate of series in the most disaggregated level at time $t$.

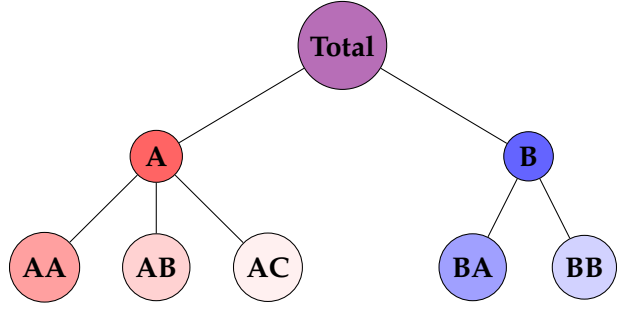
\begin{figure}  \center
	\resizebox{0.5\textwidth}{0.25\textwidth}{%
		\begin{tikzpicture}
			\tikzstyle{every node}=[circle,draw,inner sep=3pt]
			\tikzstyle[level distance=.1cm]
			\tikzstyle[sibling distance=.1cm]
			\tikzstyle{level 1}=[sibling distance=50mm,font=\small]
			\tikzstyle{level 2}=[sibling distance=15mm,font=\footnotesize]
			\node[fill=lpink, font=\bfseries]{Total}
			child {node[fill=lred, font=\bfseries] {A}
				child {node[fill=lred1, font=\bfseries] {AA}}
				child {node[fill=lred2, font=\bfseries] {AB}}
				child {node[fill=lred3, font=\bfseries] {AC}}
			}
			child {node[fill=lblue, font=\bfseries] {B}
				child {node[fill=lblue1, font=\bfseries] {BA}}
				child {node[fill=lblue2, font=\bfseries] {BB}}
			};
	\end{tikzpicture}}
	\caption{An example of a two-level tree.}
	\label{fig:tree}
\end{figure}

For the structure given in Figure~\ref{fig:tree}, $m = 8$, $n = 5$, $\bb_t = [y_{AA, t}, y_{AB, t}, y_{AC, t}, y_{BA, t}, y_{BB, t}]^\top$, $\by_t = [y_t, y_{A, t}, y_{B, t}, y_{AA, t}, y_{AB, t}, y_{AC, t}, y_{BA, t}, y_{BB, t}]^\top$,  and
$$
\bS = \left[\begin{array}{ccccc}
	1 & 1 & 1 & 1 & 1\\
	1 & 1 & 1 & 0 & 0\\
	0 & 0 & 0 & 1 & 1\\
	& & \bI_5 & &
\end{array}
\right],
$$
where $\bI_k$ denotes an identity matrix of order $k \times k$.

These notations can be easily extended to any large collection of time series subject to any aggregation constraints. We should also emphasize that the definition of $\bS$ and $\bb_t$ can differ depending on the application \citep{sha17, jeoetal19}.

Define $\hat{\by}_{t+h|t} \in \mathbb{R}^m$ to be the vector consisting of $h$-step-ahead base forecasts for each time series in the structure, made using observations up to and including time $t$, and arranged in the same order as $\by_t$. The reconciliation methods with linear constraints can be expressed as
$$
\tilde{\by}_{t+h|t} = \bS\bG_h\hat{\by}_{t+h|t},
$$
where $\bG_h \in \mathbb{R}^{n \times m}$ is a matrix which linearly maps a set of base forecasts into a new set of forecasts which are then linearly combined by pre-multiplying with $\bS$ to form a set of reconciled forecasts given by $\tilde{\by}_{t+h|t}$. Assuming that the base forecasts are unbiased, \citet{hynetal11} have shown that the reconciled forecasts are unbiased if and only if $\bS\bG_h\bS = \bS$ or equivalently, $\bG_h\bS = \bI_n$ holds. \citet{panetal20} have provided a different interpretation to these conditions. They have shown that these conditions are equivalent to assuming that $\bS\bG_h$ is a projection matrix onto the column space of $\bS$.

\subsection{Point forecast reconciliation methods}
\label{sec:pfrecon}

\subsubsection{GLS reconciliation}
\citet{hynetal11} considered the following regression model for developing a reconciliation method:
\begin{align}
	\label{eq:reg}
	\hat{\by}_{t+h|t} = \bS\bm{\beta}_{t+h|t} + \bm{\varepsilon}_{t+h},
\end{align}
where $\bm{\beta}_{t+h|t} \in \mathbb{R}^n$ is the vector of unknown means at the most disaggregated level and $\bm{\varepsilon}_{t+h}$ is the coherence error with mean zero and variance covariance matrix $\bm{\Sigma}_h$. These errors are assumed to be independent of observations $\by_1, \by_2, \dots, \by_t$.

If $\bm{\Sigma}_h$ was known, the GLS estimator of $\bm{\beta}_{t+h|t}$ gives the minimum variance unbiased estimator, which results reconciled forecasts to be given by
\begin{align}
	\label{eq:gls}
	\tilde{\by}_{t+h|t} = \bS\left(\bS^\top\bm{\Sigma}_h^\dagger\bS\right)^{-1}\bS^\top\bm{\Sigma}_h^\dagger\hat{\by}_{t+h|t},
\end{align}
where $\bm{\Sigma}_h^\dagger$ is the Moore-Penrose generalized inverse of $\bm{\Sigma}_h$. In general $\bm{\Sigma}_h$ is unknown and \citet{wicetal19} have shown that identifiability issues could arise when the residuals from the regression model in Eq.\ \eqref{eq:reg} are used to estimate $\bm{\Sigma}_h$. \citet{hynetal11} avoided the necessity of this estimate by assuming that $\bm{\Sigma}_h = k_h\bI_m$ for all $h$, where $k_h$ is a positive constant. This is widely known as the OLS approach. \citet{hynetal16} proposed a WLS estimator for $\bm{\beta}_{t+h|t}$ by assuming that $\bm{\Sigma}_h = k_h\bm{\Lambda}_h$ for all $h$, where $\bm{\Lambda}_h$ is a diagonal matrix with elements given by the variances of $\bm{\varepsilon}_{t+h}$ and $k_h > 0$. Due to estimation difficulties with the diagonal elements of $\bm{\Lambda}_h$, they recommended to use the variances of one-step-ahead in-sample base forecast errors.

\subsubsection{MinT reconciliation}

\citet{wicetal19} proposed a reconciliation approach by minimizing the trace of the $h$-step-ahead covariance matrix of the reconciled forecast errors. This method assumes that the base forecasts are unbiased and incorporated $\bG_h\bS = \bI_n$ conditions into the minimization problem to ensure that the reconciled forecasts are also unbiased. Specifically, they considered the following optimization problem:
\begin{align*}
	\minidx_{\bG_h}\E\left\|\by_{t+h} - \tilde{\by}_{t+h|t}\right\|_2^2  = \minidx_{\bG_h}\E\left\|\by_{t+h}\right. & - \left. \bS\bG_h\hat{\by}_{t+h|t}\right\|_2^2 = \minidx_{\bG_h}\tr\left[\bS\bG_h\bW_h\bG_h^\top\bS^\top\right]\\
	\text{s.t.}\quad \bG_h\bS & = \bI_n,
\end{align*}
where $\|\cdot\|_2$ denotes the $l_2$-norm, $\tr(\cdot)$ denotes the trace of a square matrix and $\bW_h$ is the positive definite covariance matrix of the $h$-step-ahead base forecast errors (i.e. $\by_{t+h} - \hat{\by}_{t+h|t}$). The unique solution leads reconciled forecasts to be computed by
\begin{align}
	\label{eq:mint}
	\tilde{\by}_{t+h|t} & = \bS\left(\bS^\top\bW_h^{-1}\bS\right)^{-1}\bS^\top\bW_h^{-1}\hat{\by}_{t+h|t}\\
	& = \bS\left[\bJ - \bJ\bW_h\bU\left(\bU^\top\bW_h\bU\right)^{-1}\bU^\top\right]\hat{\by}_{t+h|t}, \nonumber
\end{align}
where $ \bS^\top = \begin{pmat}[|] \bm{C}^\top & \bI_n\cr\end{pmat}$, ~
$\bm{J} = \begin{pmat}[|] \bm{0}_{n \times m^{*}} & \bI_n\cr\end{pmat}$, ~
$\bm{U}^\top = \begin{pmat}[|]\bI_{m^{*}}  & -\bm{C}\cr\end{pmat}$, ~ and
$m^{*} = m - n$.

Even though $\bG_h$ takes the same form for GLS and MinT, the covariance matrices which enter Eqs.~\eqref{eq:gls} and \eqref{eq:mint} are entirely different: the former is the covariance matrix of the coherence errors, and the latter is the covariance matrix of the base forecast errors.

Assuming that $\bW_h = k_h \bW_1$ for all $h$, where $k_h > 0$ and $h$-step-ahead base forecast errors are jointly covariance stationary, \citet{wicetal19} suggested two estimators for $\bW_h$. They are the unbiased sample covariance matrix and the shrinkage estimator with diagonal target comprising of diagonal elements of the sample covariance matrix, which we refer to as MinT(Sample) and MinT(Shrink), respectively.

\subsubsection{Empirical risk minimizer (ERM) reconciliation}

\citet{benkoo19} introduced a reconciliation approach that does not depend on the unbiasedness of base or reconciled forecasts. Let $\by_1,\by_2, \dots, \by_T$ be observations of all the time series in the structure, and $T$ is the length of each time series. The ERM reconciliation considered the following optimization problem:
\begin{align*}
	\minidx_{\bG_h} \frac{1}{m(T-T_1-h+1)}\left\|\bY_h - \hat{\bY}_h\bG_h^\top\bS^\top \right\|_F^2,
\end{align*}
where $T_1$ is the number of observations used for model fitting, $\|\cdot\|_F$ is the Frobenius norm defined as $\displaystyle \|\bm{X}\|_F = \sqrt{\tr(\bm{X}^\top\bm{X})}$,
\begin{align*}
	\bY_h & = \bB_h\bS^\top = \left[\by_{T_1+h}, \by_{T_1+h+1}, \dots, \by_{T}\right]^\top \in \mathbb{R}^{(T-T_1-h+1) \times m}, \\
	\bB_h & = \left[\bb_{T_1+h}, \bb_{T_1+h+1}, \dots, \bb_{T}\right]^\top \in \mathbb{R}^{(T-T_1-h+1) \times n} \quad \text{and}\\
	\hat{\bY}_h & = \left[\hat{\by}_{T_1+h|T_1}, \hat{\by}_{T_1+h+1|T_1+1}, \dots, \hat{\by}_{T|T-h}\right]^\top \in \mathbb{R}^{(T-T_1-h+1) \times m}.
\end{align*}

Assuming that the observations are jointly weakly stationary and $\hat{\bY}_h^\top\hat{\bY}_h$ is positive definite, the reconciled forecasts from ERM can be computed as
\begin{align*}
	\tilde{\by}_{T+h|T} = \bS\bB_h^\top\hat{\bY}_h\left(\hat{\bY}_h^\top\hat{\bY}_h\right)^{-1}\hat{\by}_{T+h|T}.
\end{align*}

If $\hat{\bY}_h^\top\hat{\bY}_h$ is positive semi-definite, \citet{benkoo19} suggested using the thin singular value decomposition (SVD) of $\hat{\bY}_h$. Then the solution becomes
\begin{align*}
	\tilde{\by}_{T+h|T} = \bS\bB_h^\top\bU_{\hat{\bY}_h}\bm{D}_{\hat{\bY}_h}^{-1}\bV_{\hat{\bY}_h}^\top\hat{\by}_{t+h|t},
\end{align*}
where the thin SVD of $\hat{\bY}_h$ is $\hat{\bY}_h = \bU_{\hat{\bY}_h}\bm{D}_{\hat{\bY}_h}\bV_{\hat{\bY}_h}^\top$, and $\bU_{\hat{\bY}_h}$, $\bV_{\hat{\bY}_h}$ are matrices with orthonormal columns and $\bm{D}_{\hat{\bY}_h}$ is a diagonal matrix with positive entries representing the singular values of $\hat{\bY}_h$.

\section{Theoretical properties of point forecast reconciliation}
\label{sec:newtheory}
\subsection{Relationship between GLS and MinT}

\citet{wicetal19} considered that GLS and MinT are two different approaches to forecast reconciliation. In this section, we prove that both approaches lead to the same $\bG_h$ matrix even though the variance covariance matrices involved are entirely different.

\begin{proposition}
	\label{prop:glsvsmint}
	The reconciled forecasts from the GLS approach proposed by \citet{hynetal11} and the MinT approach proposed by \citet{wicetal19} are equivalent.
\end{proposition}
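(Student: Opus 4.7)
The plan is to exploit the fact that the two covariance matrices differ by a rank-$n$ update, and that such updates leave the weighted oblique projection $\bS(\bS^\top \bW^{-1} \bS)^{-1} \bS^\top \bW^{-1}$ invariant.

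First I would derive the algebraic relationship between $\bSigma_h$, the covariance of the coherence errors in Eq.~\eqref{eq:reg}, and $\bW_h$, the covariance of the base-forecast errors. Writing
\[
\by_{t+h} - \hat{\by}_{t+h|t} \;=\; \bS(\bb_{t+h} - \bbeta_{t+h|t}) - \bepsilon_{t+h},
\]
and invoking the assumption (after Eq.~\eqref{eq:reg}) that $\bepsilon_{t+h}$ is uncorrelated with the bottom-level innovation $\bb_{t+h} - \bbeta_{t+h|t}$ when $\bbeta_{t+h|t}$ is taken to be the conditional mean of $\bb_{t+h}$, the covariances decouple to yield
\[
\bW_h \;=\; \bS \bV_h \bS^\top + \bSigma_h, \qquad \bV_h = \var(\bb_{t+h} - \bbeta_{t+h|t}).
\]

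Second, under the generic assumption that $\bSigma_h$ and $\bV_h$ are positive definite, I would plug this decomposition into the MinT mapping and apply the Woodbury identity to $\bW_h^{-1}$. After pre- and post-multiplying by $\bS^\top$ and $\bS$, the rank-$n$ correction factors cancel exactly, producing
\[
\bS(\bS^\top \bW_h^{-1} \bS)^{-1} \bS^\top \bW_h^{-1} \;=\; \bS(\bS^\top \bSigma_h^{-1} \bS)^{-1} \bS^\top \bSigma_h^{-1},
\]
which is precisely the GLS mapping in Eq.~\eqref{eq:gls}, since $\bSigma_h^\dagger = \bSigma_h^{-1}$ in this case. Concretely, setting $\bm{M} = \bS^\top \bSigma_h^{-1} \bS$ and $\bm{N} = \bV_h^{-1} + \bm{M}$, the intermediate identities $\bS^\top \bW_h^{-1} = \bV_h^{-1} \bm{N}^{-1} \bS^\top \bSigma_h^{-1}$ and $(\bS^\top \bW_h^{-1} \bS)^{-1} = \bm{M}^{-1} \bm{N} \bV_h$ make the cancellation $\bV_h^{-1} \bm{N}^{-1} \cdot \bm{N} \bV_h = \bI_n$ transparent.

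The main obstacle is the singular case. The GLS formula uses $\bSigma_h^\dagger$ because $\bSigma_h$ need not be invertible; for example, if the base forecasts happen to be coherent then $\bepsilon_{t+h}$ is forced to lie in the column space of $\bS$ and $\bSigma_h$ is rank deficient. In that regime the Woodbury identity does not apply verbatim and a separate argument is required. I would handle this either by a continuity argument --- replacing $\bSigma_h$ by $\bSigma_h + \epsilon \bI_m$, running the invertible calculation, and letting $\epsilon \to 0$ --- or by a direct pseudoinverse computation that uses $\text{col}(\bS) \subseteq \text{col}(\bSigma_h + \bS \bV_h \bS^\top)$ to keep $\bS^\top \bSigma_h^\dagger \bS$ invertible and permits the same cancellation to go through.
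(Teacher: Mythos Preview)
Your argument is correct, and it starts from the same key identity as the paper, namely $\bW_h = \bS\bV_h\bS^\top + \bSigma_h$ (the paper writes $\bOmega_h$ for your $\bV_h$ and justifies the vanishing cross term by iterated expectations). Where you diverge is in how you show that the projection is invariant under this rank-$n$ perturbation. You work directly with the form $\bS(\bS^\top\bW_h^{-1}\bS)^{-1}\bS^\top\bW_h^{-1}$ and use the Woodbury identity, which forces you to assume $\bSigma_h$ and $\bV_h$ are invertible and then patch the singular case by a limiting or pseudoinverse argument. The paper instead switches to the equivalent null-space representation $\bG_h = \bJ - \bJ\bW_h\bU(\bU^\top\bW_h\bU)^{-1}\bU^\top$; because the columns of $\bU$ span the null space of $\bS^\top$, every occurrence of the perturbation $\bS\bOmega_h\bS^\top$ is annihilated on contact with $\bU$, and one reads off $\bG_h^{MinT}=\bG_h^{GLS}$ in one line. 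That route buys you two things for free: no Woodbury algebra, and no separate treatment of singular $\bSigma_h$ or $\bV_h$, since $\bU^\top\bSigma_h\bU = \bU^\top\bW_h\bU$ is automatically positive definite whenever $\bW_h$ is. Your approach, on the other hand, is self-contained in the sense that it never leaves the standard GLS projection formula and does not rely on knowing the alternative $\bU$-form of MinT.
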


\begin{proof}
	The mean squared error of $h$-step-ahead base forecasts can be decomposed as
	\begin{align}
		\label{eq:glsvsmint}
		\bW_h & = \E\left[\by_{t+h} - \hat{\by}_{t+h|t}\right]\left[\by_{t+h} - \hat{\by}_{t+h|t}\right]^\top \nonumber\\
		& = \E\left[\by_{t+h} - \bS\bbeta_{t+h|t}  + \bS\bbeta_{t+h|t} - \hat{\by}_{t+h|t}\right]\left[\by_{t+h} - \bS\bbeta_{t+h|t}  + \bS\bbeta_{t+h|t} - \hat{\by}_{t+h|t}\right]^\top \nonumber\\
		& = \E\left[\by_{t+h} - \bS\bbeta_{t+h|t}\right]\left[\by_{t+h} - \bS\bbeta_{t+h|t}\right]^\top + \E\left[\bS\bbeta_{t+h|t} - \hat{\by}_{t+h|t}\right]\left[\bS\bbeta_{t+h|t} - \hat{\by}_{t+h|t}\right]^\top \nonumber\\
		& = \bS \bOmega_{h} \bS^\top + \bSigma_h.
	\end{align}
	Using the law of iterated expectation, we can show that the cross-product term that appears in the simplification mentioned above becomes zero:
	\begin{align*}
		\E\left[\by_{t+h} - \bS\bbeta_{t+h|t}\right] & \left[\bS\bbeta_{t+h|t}  - \hat{\by}_{t+h|t}\right]^\top \\
		& = \E\left\{\E\left[\left(\by_{t+h} - \bS\bbeta_{t+h|t}\right)\left(\bS\bbeta_{t+h|t} - \hat{\by}_{t+h|t}\right)^\top \big| \by_1, \by_2, \dots, \by_t\right]\right\}\\
		& = \E\left[\bS\bbeta_{t+h|t} - \bS\bbeta_{t+h|t}\right]\left[\bS\bbeta_{t+h|t} - \hat{\by}_{t+h|t}\right]^\top \\
		& = \bm{0}.
	\end{align*}
	
	Substituting Eq.\ \eqref{eq:glsvsmint} into the alternative representation of the MinT approach:
	\begin{align*}
		\bG^{MinT}_h & = \bJ - \bJ\bW_h\bU\left(\bU^\top\bW_h\bU\right)^{-1}\bU^\top \\
		& = \bJ - \bJ\bSigma_h\bU\left(\bU^\top\bSigma_h\bU\right)^{-1}\bU^\top\\
		& = \bG^{GLS}_h,
	\end{align*}
	where $\bG^{GLS}_h$ and $\bG^{MinT}_h$ denote the $\bG_h$ matrix for the GLS and MinT approaches, respectively. The second equality follows from the fact that the columns of the matrix $\bU$ consists of the null-space basis vectors of $\bS^\top$.
\end{proof}

One major implication of Proposition~\ref{prop:glsvsmint} is that if one wishes to use $\bG^{OLS}$ then it may not necessarily indicate that an assumption is made about the covariance matrix of the base forecast errors been isotropic. Alternatively, it may be that the covariance matrix of coherence errors is isotropic, which may be realistic if the $h$-step-ahead base forecasts do not deviate much from the true means (i.e., $\bS\bbeta_{t+h|t}$).

\subsection{Mean squared error bounds for MinT, OLS and base forecasts}

\citet{wicetal19} have argued that the MinT reconciled forecasts are at least as good as the base forecasts by showing that
\begin{align*}
	\left[\by_{t+h} - \tilde{\by}^{MinT}_{t+h|t}\right]^\top\bW_h^{-1}\left[\by_{t+h} - \tilde{\by}^{MinT}_{t+h|t}\right] \leq \left[\by_{t+h} - \hat{\by}_{t+h|t}\right]^\top\bW_h^{-1}\left[\by_{t+h} - \hat{\by}_{t+h|t}\right],
\end{align*}
where $\tilde{\by}^{MinT}_{t+h|t}$ is the MinT reconciled forecasts.

As pointed by \citet{panetal20}, this holds true only if we use a loss function that depends on $\bW_h$. If one wishes to use Euclidean distance (i.e., $\bW_h = \bI_m$), then there can be realizations where the MinT approach performs poorly relative to the base forecasts. In such a case, OLS reconciled forecasts are at least as good as the base forecasts. This can be clearly seen from the fact that
\begin{align*}
	\left\|\by_{t+h} - \tilde{\by}_{t+h} \right\|^2_2 = \left\|\bS\bG_h\left(\by_{t+h} - \hat{\by}_{t+h}\right)\right\|_2^2 & \leq   \left\|\bS\bG_h\right\|_2^2\left\|\by_{t+h} - \hat{\by}_{t+h}\right\|_2^2 \\ 
	& = \sigma^2_{\max}\left\|\by_{t+h} - \hat{\by}_{t+h}\right\|_2^2
\end{align*}
for any $\bG_h$ matrix which satisfies $\bG_h\bS = \bI_n$. $\sigma_{\max}$ denotes the largest singular value of $\bS\bG_h$. It is well known that $\sigma_{\max} \geq 1$ as $\bG_h\bS = \bI_n$ implies that $\bS\bG_h$ is a projection, and the equality holds only for an orthogonal projection (i.e., for the OLS reconciliation approach).

We show in Theorem~\ref{thm:bounds} that \textit{on average} the MinT approach produces the best reconciled forecasts. Hence mis-specifying $\bW_h$ or $\bSigma_h$ as isotropic when it is not can degrade the accuracy of the reconciled forecasts.
\begin{theorem}
	\label{thm:bounds}
	On average, the MinT reconciled forecasts are at least as good (i.e., lowest total MSE) as the base forecasts. In other words,
	\begin{align*}
		\tr\left(\E\left[\by_{t+h} - \tilde{\by}^{MinT}_{t+h|t}\right]\left[\by_{t+h} - \tilde{\by}^{MinT}_{t+h|t}\right]^\top\right) & \leq \tr\left(\E\left[\by_{t+h} - \tilde{\by}^{OLS}_{t+h|t}\right]\left[\by_{t+h} - \tilde{\by}^{OLS}_{t+h|t}\right]^\top\right)\\
		& < \tr\left(\E\left[\by_{t+h} - \hat{\by}_{t+h|t}\right]\left[\by_{t+h} - \hat{\by}_{t+h|t}\right]^\top\right),
	\end{align*}
	where $\tilde{\by}^{OLS}_{t+h|t}$ is the OLS reconciled forecasts. Furthermore, the mean squared reconciled forecast error from the MinT approach for each series in the structure is lower than that of OLS and base forecasts.
\end{theorem}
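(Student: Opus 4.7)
The overall plan is to recast all the comparisons in terms of the MSE matrix
\begin{align*}
\E\left[\by_{t+h} - \tilde{\by}_{t+h|t}\right]\left[\by_{t+h} - \tilde{\by}_{t+h|t}\right]^\top = \bS\bG_h\bW_h\bG_h^\top\bS^\top,
\end{align*}
which holds for any $\bG_h$ with $\bG_h\bS = \bI_n$ because then $\by_{t+h} - \bS\bG_h\hat{\by}_{t+h|t} = \bS\bG_h(\by_{t+h} - \hat{\by}_{t+h|t})$. Its trace is the total MSE and its $i$-th diagonal entry is the MSE of series $i$, so both halves of the theorem reduce to matrix inequalities.

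The central tool I would establish first is a Loewner-order minimality of $\bG_h^{MinT}$: for any $\bG_h$ with $\bG_h\bS = \bI_n$,
\begin{align*}
\bG_h\bW_h\bG_h^\top - \bG_h^{MinT}\bW_h(\bG_h^{MinT})^\top = (\bG_h - \bG_h^{MinT})\bW_h(\bG_h - \bG_h^{MinT})^\top.
\end{align*}
Expanding $\bG_h = \bG_h^{MinT} + (\bG_h - \bG_h^{MinT})$, the cross terms collapse because $\bW_h(\bG_h^{MinT})^\top = \bS(\bS^\top\bW_h^{-1}\bS)^{-1}$, so $(\bG_h - \bG_h^{MinT})\bW_h(\bG_h^{MinT})^\top = (\bG_h\bS - \bG_h^{MinT}\bS)(\bS^\top\bW_h^{-1}\bS)^{-1} = \bm{0}$. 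The right-hand side is positive semi-definite, and pre/post-multiplying by $\bS$ and $\bS^\top$ preserves this. Applying the identity with $\bG_h = \bG_h^{OLS}$ therefore gives that $\bS\bG_h^{OLS}\bW_h(\bG_h^{OLS})^\top\bS^\top - \bS\bG_h^{MinT}\bW_h(\bG_h^{MinT})^\top\bS^\top$ is PSD. Taking traces delivers the total MSE inequality and reading off the diagonals delivers the per-series inequality, both in favor of MinT.

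For the strict OLS--base total inequality, I would use that $\bS\bG_h^{OLS} = \bS(\bS^\top\bS)^{-1}\bS^\top$ is the orthogonal projector onto the column space of $\bS$, yielding
\begin{align*}
\tr(\bW_h) - \tr\left[\bS\bG_h^{OLS}\bW_h(\bG_h^{OLS})^\top\bS^\top\right] = \tr\left[(\bI_m - \bS\bG_h^{OLS})\bW_h(\bI_m - \bS\bG_h^{OLS})\right],
\end{align*}
which is strictly positive because $\bW_h$ is positive definite and $\bI_m - \bS\bG_h^{OLS}$ has rank $m - n > 0$. For the MinT--base per-series inequality, the same identity collapses the MinT MSE matrix to $\bS(\bS^\top\bW_h^{-1}\bS)^{-1}\bS^\top$, and it remains to verify that $\bW_h - \bS(\bS^\top\bW_h^{-1}\bS)^{-1}\bS^\top$ is PSD. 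Conjugating by $\bW_h^{-1/2}$ and setting $\bm{T} = \bW_h^{-1/2}\bS$ reduces the claim to showing $\bI_m - \bm{T}(\bm{T}^\top\bm{T})^{-1}\bm{T}^\top$ is PSD, which is immediate since this matrix is an orthogonal projector.

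The step I expect to be the most delicate is the MinT-versus-base per-series comparison: the base forecast cannot be written as $\bS\bG_h\hat{\by}_{t+h|t}$ with $\bG_h\bS = \bI_n$ (such an $\bS\bG_h$ would be a projection onto a proper subspace of $\mathbb{R}^m$), so this piece falls outside the scope of the Gauss--Markov-style identity that handles the MinT-versus-OLS case. The workaround is the explicit simplification of the MinT MSE matrix followed by the $\bW_h^{-1/2}$ change of variables, which converts the claim into a standard orthogonal-projector inequality.
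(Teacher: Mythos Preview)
Your proof is correct and tracks the paper's argument closely for the OLS-versus-base trace inequality and for the MinT-versus-base per-series inequality (the paper performs the same $\bW_h^{1/2}$ factorization to reduce the latter to an orthogonal-projector statement). The one place you diverge is the MinT-versus-OLS step: the paper computes the MSE-matrix difference directly as $\bS\bm{D}_h^\top\bW_h\bm{D}_h\bS^\top$ with $\bm{D}_h = \bS(\bS^\top\bS)^{-1} - \bW_h^{-1}\bS(\bS^\top\bW_h^{-1}\bS)^{-1}$, whereas you first establish the Gauss--Markov-style identity $\bG_h\bW_h\bG_h^\top - \bG_h^{MinT}\bW_h(\bG_h^{MinT})^\top = (\bG_h - \bG_h^{MinT})\bW_h(\bG_h - \bG_h^{MinT})^\top$ for \emph{any} $\bG_h$ with $\bG_h\bS = \bI_n$ and then specialize to $\bG_h^{OLS}$. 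The two routes coincide exactly (your $\bG_h^{OLS} - \bG_h^{MinT}$ is precisely the paper's $\bm{D}_h^\top$), but your framing has the added benefit of showing, at no extra cost, that MinT is Loewner-minimal among all unbiasedness-preserving reconciliations, not just relative to OLS.
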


\begin{proof}
	We first show that
	$$\E\left[\by_{t+h} - \tilde{\by}^{OLS}_{t+h|t}\right]\left[\by_{t+h} - \tilde{\by}^{OLS}_{t+h|t}\right]^\top - \E\left[\by_{t+h} - \tilde{\by}^{MinT}_{t+h|t}\right]\left[\by_{t+h} - \tilde{\by}^{MinT}_{t+h|t}\right]^\top
	$$
	is positive semi-definite.
	\begin{align*}
		\E\left[\by_{t+h} - \tilde{\by}^{OLS}_{t+h|t}\right]\left[\by_{t+h} - \tilde{\by}^{OLS}_{t+h|t}\right]^\top & - \E\left[\by_{t+h} - \tilde{\by}^{MinT}_{t+h|t}\right]\left[\by_{t+h} - \tilde{\by}^{MinT}_{t+h|t}\right]^\top \\
		& = \bS\left(\bS^\top\bS\right)^{-1}\bS^\top\bW_h\bS\left(\bS^\top\bS\right)^{-1}\bS^\top - \\
		& \qquad \bS\left(\bS^\top\bW_h^{-1}\bS\right)^{-1}\bS^\top \\
		& = \bS\bm{D}_h^\top \bW_h\bm{D}_h\bS^\top,
	\end{align*}
	where $\bm{D}_h = \bS\left(\bS^\top\bS\right)^{-1} - \bW_h^{-1}\bS\left(\bS^\top\bW_h^{-1}\bS\right)^{-1}$. The positive definiteness of $\bW_h$ implies that $\bS\bm{D}_h^\top \bW_h\bm{D}_h\bS^\top$ is positive semi-definite. Hence,
	\begin{align}
		\label{eq:mintvsols}
		\tr\left(\E\left[\by_{t+h} - \tilde{\by}^{MinT}_{t+h|t}\right]\left[\by_{t+h} - \tilde{\by}^{MinT}_{t+h|t}\right]^\top\right) \leq \tr\left(\E\left[\by_{t+h} - \tilde{\by}^{OLS}_{t+h|t}\right]\left[\by_{t+h} - \tilde{\by}^{OLS}_{t+h|t}\right]^\top\right).
	\end{align}
	
	Let $(\cdot)_{ii}$ denotes the $i$-th diagonal element of a square matrix. It can be clearly seen that $\left(\bS\bm{D}_h^\top \bW_h\bm{D}_h\bS^\top\right)_{ii} \geq 0$ for $i = 1, 2, \dots, m$ as $\bS\bm{D}_h^\top \bW_h\bm{D}_h\bS^\top$ is positive semi-definite. Therefore,
	\begin{align}
		\label{eq:unimintvsols}
		\left(\E\left[\by_{t+h} - \tilde{\by}^{MinT}_{t+h|t}\right]\left[\by_{t+h} - \tilde{\by}^{MinT}_{t+h|t}\right]^\top\right)_{ii} \leq \left(\E\left[\by_{t+h} - \tilde{\by}^{OLS}_{t+h|t}\right]\left[\by_{t+h} - \tilde{\by}^{OLS}_{t+h|t}\right]^\top\right)_{ii}
	\end{align}
	for $i = 1, 2, \dots, m$.
	
	Similarly, we can show that
	\begin{align}
		\label{eq:olsvsbase}
		\tr\left(\E\left[\by_{t+h} - \tilde{\by}_{t+h|t}\right]\right. & \left.\left[\by_{t+h} - \tilde{\by}_{t+h|t}\right]^\top\right)  - \tr\left(\E\left[\by_{t+h} - \tilde{\by}^{OLS}_{t+h|t}\right]\left[\by_{t+h} - \tilde{\by}^{OLS}_{t+h|t}\right]^\top\right) \nonumber\\
		& = \tr\left(\bW_h\right) - \tr\left(\bS\left(\bS^\top \bS\right)^{-1}\bS^\top\bW_h\bS\left(\bS^\top\bS\right)^{-1}\bS^\top\right) \nonumber\\
		& = \tr\left(\bW_h\right) - \tr\left(\bS^\top\bW_h\bS\left(\bS^\top\bS\right)^{-1}\right) \nonumber\\
		& = \tr\left(\bW_h\left(\bI_m - \bS\left(\bS^\top\bS\right)^{-1}\bS^\top\right)\right) \nonumber\\
		& > 0.
	\end{align}
	The last inequality follows from the fact that $\bI_m - \bS\left(\bS^\top\bS\right)^{-1}\bS^\top$ is the orthogonal projection onto the orthogonal complement of $\bS$ and therefore positive semi-definite, and $\bW_h$ is positive definite.
	
	%
	
	It remains to show that the mean squared reconciled forecast error of MinT is smaller than that of the base forecasts for each series. Consider the difference between the variance covariance matrix of the base forecast errors and that of MinT:
	\begin{align*}
		\bW_h - \bS\left(\bS^\top\bW_h^{-1}\bS)^{-1}\bS^\top\right) & = \bW_h^{1/2}\left[\bI_m - \bW_h^{-1/2}\bS\left(\bS^\top\bW_h^{-1}\bS\right)^{-1}\bS^\top\bW^{-1/2}\right]\bW_h^{1/2}\\
		& = \bW_h^{1/2}\left[\bI_m - \bm{A}_h\left(\bm{A}_h^\top\bm{A}_h\right)^{-1}\bm{A}^\top\right]\bW_h^{1/2},
	\end{align*}
	where $\bm{A}_h = \bW_h^{-1/2}\bS$ and $\bI_m - \bm{A}_h\left(\bm{A}_h^\top\bm{A}_h\right)^{-1}\bm{A}^\top$ is the orthogonal projection onto the orthogonal complement of $\bm{A}_h$ and therefore is positive semi-definite. These facts imply that the difference between two variance covariance matrices is positive semi-definite. Therefore,
	\begin{align}
		\label{eq:unimintvsbase}
		\left(\E\left[\by_{t+h} - \tilde{\by}^{MinT}_{t+h|t}\right]\left[\by_{t+h} - \tilde{\by}^{MinT}_{t+h|t}\right]^\top\right)_{ii} \leq \left(\E\left[\by_{t+h} - \tilde{\by}_{t+h|t}\right]\left[\by_{t+h} - \tilde{\by}_{t+h|t}\right]^\top\right)_{ii}
	\end{align}
	for $i = 1, 2, \dots, m$.
	Eqs.\ \eqref{eq:mintvsols}--\eqref{eq:unimintvsbase} complete the proof.
\end{proof}

\begin{remark}
	An implication of Theorem~\ref{thm:bounds} is that for a given estimate of $\bW_h$, the reconciled forecast error variance of a series in the structure is higher for OLS than MinT. Hence, the width of the point-wise prediction intervals is wider for the OLS approach than MinT under the Gaussian assumption.
\end{remark}

\subsection{Unconstrained MinT reconciliation}

Both \citet{hynetal11} and \citet{wicetal19} considered $\bG_h\bS = \bI_n$ as a set of conditions for preserving the unbiasedness of the reconciled forecasts given that the base forecasts are unbiased. In the geometric interpretation of the forecast reconciliation approaches, \citet{panetal20} showed that these conditions lead $\bS\bG_h$ to be a projection matrix onto the column space of $\bS$. In this section, we study the impact of relaxing these conditions on the accuracy of the reconciled forecasts.

Define $\bG_h = \bJ + \bX_{h}$. The covariance matrix of the reconciled forecast errors can be written as
\begin{align*}
	\E\left[\by_{t+h} - \tilde{\by}_{t+h|t}\right] & \left[\by_{t+h} - \tilde{\by}_{t+h|t}\right]^\top = \bS \E\left[\bb_{t+h} - \bG_h\hat{\by}_{t+h|t}\right]\left[\bb_{t+h} - \bG_h\hat{\by}_{t+h|t}\right]^\top\bS^\top\\
	& = \bS \E\left[\bb_{t+h} - \left(\bJ + \bX_{h}\right)\hat{\by}_{t+h|t}\right]\left[\bb_{t+h} - \left(\bJ + \bX_{h}\right)\hat{\by}_{t+h|t}\right]^\top\bS^\top\\
	& = \bS \E\left[\hat{\be}_{B, t+h|t} - \bX_{h}\hat{\by}_{t+h|t}\right]\left[\hat{\be}_{B, t+h|t} - \bX_{h}\hat{\by}_{t+h|t}\right]^\top\bS^\top,
\end{align*}
where $\hat{\be}_{B, t+h|t} = \bb_{t+h} - \hat{\bb}_{t+h|t}$ is the base forecast error at the bottom level.

Assuming that $\E\left[\hat{\by}_{t+h|t}\hat{\by}_{t+h|t}^\top\right]$ is positive definite, the minimizer of the trace of the covariance matrix of the reconciled errors defined above is given by
\begin{align}
	\label{eq:mintu}
	\bX_{h}^* & = \E\left[\hat{\be}_{B, t+h|t}\hat{\by}_{t+h|t}^\top\right]\E\left[\hat{\by}_{t+h|t}\hat{\by}_{t+h|t}^\top\right]^{-1} \quad \text{and} \nonumber\\
	\bG_h^* & = \bJ + \E\left[\hat{\be}_{B, t+h|t}\hat{\by}_{t+h|t}^\top\right]\E\left[\hat{\by}_{t+h|t}\hat{\by}_{t+h|t}^\top\right]^{-1} \nonumber\\
	& = \bJ\E\left[\by_{t+h}\hat{\by}_{t+h|t}^\top\right]\E\left[\hat{\by}_{t+h|t}\hat{\by}_{t+h|t}^\top\right]^{-1} \nonumber\\
	& = \E\left[\bb_{t+h}\hat{\by}_{t+h|t}^\top\right]\E\left[\hat{\by}_{t+h|t}\hat{\by}_{t+h|t}^\top\right]^{-1}.
\end{align}
We refer to this reconciliation approach as ``MinT-U'' (MinT unconstrained).

\begin{theorem}
	\label{thm:mintuvsmint}
	The total mean squared reconciled forecast error of MinT-U is smaller than that of MinT. In addition, the mean squared reconciled forecast error from MinT-U for each series in the structure is lower than that of MinT.
\end{theorem}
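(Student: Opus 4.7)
The plan is to exploit the fact that MinT-U is by construction the \emph{unconstrained} minimizer of the trace of the reconciled MSE matrix, whereas MinT minimizes the same objective subject to $\bG_h\bS=\bI_n$. The total MSE inequality therefore follows immediately from inclusion of feasible sets, but the per-series claim is strictly stronger and requires me to show that the \emph{full} matrix difference of the reconciled MSE matrices is positive semi-definite. This I would do via a ``completing the square'' argument of the same flavour used in the proof of Theorem~\ref{thm:bounds}.

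First I would write the reconciled error covariance, using coherence $\by_{t+h}=\bS\bb_{t+h}$, as
\begin{align*}
	\E\left[\by_{t+h}-\tilde{\by}_{t+h|t}\right]\left[\by_{t+h}-\tilde{\by}_{t+h|t}\right]^\top = \bS\,L(\bG_h)\,\bS^\top,
\end{align*}
where $L(\bG_h)=\E\left[\bb_{t+h}-\bG_h\hat{\by}_{t+h|t}\right]\left[\bb_{t+h}-\bG_h\hat{\by}_{t+h|t}\right]^\top$, exactly as in the derivation preceding Eq.~\eqref{eq:mintu}. Expanding $L(\bG_h)$ and writing $\bm{P}=\E[\bb_{t+h}\hat{\by}_{t+h|t}^\top]$ and $\bm{Q}=\E[\hat{\by}_{t+h|t}\hat{\by}_{t+h|t}^\top]$, the MinT-U solution is $\bG_h^* = \bm{P}\bm{Q}^{-1}$, and a direct calculation yields the identity
\begin{align*}
	L(\bG_h) - L(\bG_h^*) = (\bG_h - \bG_h^*)\,\bm{Q}\,(\bG_h - \bG_h^*)^\top,
\end{align*}
valid for every $\bG_h\in\mathbb{R}^{n\times m}$. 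Since $\bm{Q}$ is assumed positive definite, this difference is positive semi-definite.

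Specialising to $\bG_h=\bG_h^{MinT}$ and conjugating by $\bS$, I obtain
\begin{align*}
	\bS\,L(\bG_h^{MinT})\,\bS^\top - \bS\,L(\bG_h^*)\,\bS^\top = \bS(\bG_h^{MinT}-\bG_h^*)\,\bm{Q}\,(\bG_h^{MinT}-\bG_h^*)^\top\bS^\top \succeq 0.
\end{align*}
Taking the trace gives the total MSE inequality, and reading off the $i$-th diagonal element gives the per-series inequality for $i=1,\dots,m$, completing the proof.

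The argument is essentially routine matrix algebra once the completing-the-square identity is in hand; the only mild subtlety is that the dominance of MinT-U over MinT is not just a ``smaller feasible set'' consequence (which would only yield the trace statement) but needs the full PSD ordering. The positive definiteness assumption on $\bm{Q}$, which was imposed precisely to define MinT-U, is what makes the quadratic term strictly well-behaved, and no additional structural hypothesis on $\bG_h^{MinT}$ beyond its definition is needed.
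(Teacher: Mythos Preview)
Your proof is correct, and it takes a genuinely different route from the paper's. The paper does not use the completing-the-square identity $L(\bG_h)-L(\bG_h^*)=(\bG_h-\bG_h^*)\,\bm{Q}\,(\bG_h-\bG_h^*)^\top$. Instead, it writes the difference of the two reconciled MSE matrices explicitly as
\[
\bm{\Delta}_h=\bS\,\E\!\left[\hat{\be}_{B,t+h|t}\hat{\by}_{t+h|t}^\top\right]\left[\bU(\bU^\top\bV_h\bU)^{-1}\bU^\top-\bV_h^{-1}\right]\E\!\left[\hat{\by}_{t+h|t}\hat{\be}_{B,t+h|t}^\top\right]\bS^\top,
\]
with $\bV_h=\E[\hat{\by}_{t+h|t}\hat{\by}_{t+h|t}^\top]$, and then shows the inner factor is negative semi-definite via the orthogonal-projection argument $\bI_m-\bm{A}_h(\bm{A}_h^\top\bm{A}_h)^{-1}\bm{A}_h^\top\succeq 0$ with $\bm{A}_h=\bV_h^{1/2}\bU$, mirroring the machinery of Theorem~\ref{thm:bounds}. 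Reaching that expression tacitly requires re-expressing the MinT solution as $\bX_h^{MinT}=\E[\hat{\be}_{B,t+h|t}\hat{\by}_{t+h|t}^\top]\bU(\bU^\top\bV_h\bU)^{-1}\bU^\top$, which holds because $\bU^\top\by_{t+h}=\bm{0}$ forces $\bU^\top\bW_h\bU=\bU^\top\bV_h\bU$ and $-\bJ\bW_h\bU=\E[\hat{\be}_{B,t+h|t}\hat{\by}_{t+h|t}^\top]\bU$. Your argument sidesteps all of this: the quadratic identity delivers the Loewner ordering for \emph{any} competitor $\bG_h$, so no structural information about $\bG_h^{MinT}$ is needed beyond its existence. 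The paper's approach, in return, yields an explicit closed form for the gap $\bm{\Delta}_h$, which could be useful if one wanted to quantify the improvement rather than merely certify its sign.
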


\begin{proof}
	Let $\bV_h = \E[\hat{\by}_{t+h|t}\hat{\by}_{t+h|t}^\top]$. Define the difference in MSEs between MinT-U and MinT as
	\begin{align*}
		\bm{\Delta}_h & = \bS\E\left[\hat{\be}_{B, t+h|t}\hat{\by}_{t+h|t}^\top\right]\left[\bU(\bU^\top\bV_h\bU)^{-1}\bU^\top - \bV_h^{-1}\right]\E\left[\hat{\by}_{t+h|t}\hat{\be}_{B, t+h|t}^\top\right]\bS^\top\\
		& = \bS\E\left[\hat{\be}_{B, t+h|t}\hat{\by}_{t+h|t}^\top\right] \bm{\Delta}_h^*\E\left[\hat{\by}_{t+h|t}\hat{\be}_{B, t+h|t}^\top\right]\bS^\top,
	\end{align*}
	where $\bm{\Delta}_h^* = \bU\left(\bU^\top\bV_h\bU\right)^{-1}\bU^\top - \bV_h^{-1}$.
	
	We need to show that $\bm{\Delta}_h$ (or $\bm{\Delta}_h^*$) is negative semi-definite. Let $\bm{\Omega}_h = \bV_h^{1/2}\bU$.
	\begin{align*}
		\bm{\Delta}_h^* & = - \left[\bV_h^{-1} - \bU\left(\bU^\top\bV_h\bU\right)^{-1}\bU^\top\right]\\
		& = - \bm{\Omega}_h^{-1/2}\left[\bI_m - \bm{\Omega}_h^{1/2}\bU\left(\bU^\top\bV_h\bU\right)^{-1}\bU^\top\bm{\Omega}_h^{1/2}\right]\bm{\Omega}_h^{-1/2} \\
		& = - \bm{\Omega}_h^{-1/2}\left[\bI_m - \bm{A}_h\left(\bm{A}_h^\top\bm{A}_h\right)^{-1}\bm{A}_h^\top\right]\bm{\Omega}_h^{-1/2},
	\end{align*}
	where $\bm{A}_h = \bm{\Omega}_h^{1/2}\bU$. We know that $\bI_n - \bm{A}_h\left(\bm{A}_h^\top\bm{A}_h\right)^{-1}\bm{A}_h^\top$ is the orthogonal projection onto the orthogonal complement of $\bm{A}_h$. Hence it is positive semi-definite. This implies that $\bm{\Delta}_h^*$ is negative semi-definite and $(\bm{\Delta}_h)_{ii} \leq 0$ for $i =  1, 2, \dots, m$.
\end{proof}

\citet{panetal20} argued that relaxing $\bG_h\bS = \bI_n$ may not be desirable to deal with biased forecasts because it can compromise the attractive properties of projections, such as Euclidean or generalized Euclidean distance reducing property. However, Theorem~\ref{thm:mintuvsmint} indicates that there can be gains in forecast accuracy on average by relaxing $\bG_h\bS = \bI_n$ conditions.

Assuming that the series in the structure are jointly weakly stationary, we can estimate the unknown quantities in Eq.\ \eqref{eq:mintu} from their sample counterparts:
\begin{align}
	\label{eq:estmintu}
	\hat{\bG}_h & = \bm{B}_h^\top\hat{\bY}_h\left[\hat{\bY}_h^\top\hat{\bY}_h\right]^{-1},
\end{align}
where
\begin{align*}
	\bY_h & = \left[\by_{h}, \by_{h+1}, \dots, \by_{T}\right]^\top \in \mathbb{R}^{(T-h+1) \times m}, \\
	\bB_h & = \left[\bb_{h}, \bb_{h+1}, \dots, \bb_{T}\right]^\top \in \mathbb{R}^{(T-h+1) \times n}\quad \text{and}\\
	\hat{\bY}_h & = \left[\hat{\by}_{h|0}, \hat{\by}_{h+1|1}, \dots, \hat{\by}_{T|T-h}\right]^\top \in \mathbb{R}^{(T-h+1) \times m}.
\end{align*}
We refer to this method as ``EMinT-U'' (empirical MinT unconstrained). This method is closely related to the ERM reconciliation approach introduced by \citet{benkoo19}. ERM uses a holdout validation set to define $\bY_h, \bB_h$ and $\hat{\bY}_h$, whereas Eq.\ \eqref{eq:estmintu} uses in-sample observations and fitted values.

\begin{remark}
	\label{rmk:mintuvsmint}
	Similarly to Theorem~\ref{thm:mintuvsmint}, we can also show that the total in-sample fit of EMinT-U is always smaller than that of MinT(Sample) for at least $h = 1$.
\end{remark}

\section{Simulations}
\label{sec:simulations}
We perform two simulation designs to evaluate the performance of EMinT-U with the state-of-the-art methods discussed in Section~\ref{sec:pfrecon}. The impact of contemporaneous error correlation at the bottom level on the reconciled forecasts is assessed on a small and a large hierarchical structure in Sections~\ref{sec:simcor} and \ref{sec:simcorlarge}, respectively.

\subsection{Exploring the effect of correlation}
\label{sec:simcor}
We consider a hierarchy with two levels of aggregation and seven series in total. Specifically, four series at the bottom level were aggregated in groups of size two, which were then aggregated to give the most aggregated series. The assumed data generating process for the bottom-level series is a stationary first-order vector autoregressive (i.e., VAR(1)) process:
\begin{align*}
	\bm{b}_t & =
	\begin{bmatrix}
		\bm{A}_1 & \bm{0}\\
		\bm{0} & \bm{A}_2
	\end{bmatrix} \bm{b}_{t-1} + \bm{\varepsilon}_t,
\end{align*}
where $\bm{A}_1$ and $\bm{A}_2$ are $2 \times 2$ matrices with eigenvalues $z_{1,2} = 0.6[\cos(\pi/3) \pm i \sin(\pi/3)]$ and  $z_{3, 4} = 0.9[\cos(\pi/6) \pm i \sin(\pi/6)]$, respectively. We also assumed that $\bm{\varepsilon}_t \sim \mathcal{N}(\bm{0}, \bm{\Sigma})$, where $$\bm{\Sigma} = \begin{bmatrix} \bm{\Sigma}_1 & \bm{0} \\
	\bm{0} & \bm{\Sigma}_1\end{bmatrix}, \quad \text{and} \quad \bm{\Sigma}_1 = \begin{bmatrix}
	2 & \sqrt{6}\rho\\
	\sqrt{6}\rho & 3
\end{bmatrix},$$
and $\rho \in {0, \pm 0.1, \pm 0.2, \pm 0.3, \dots, \pm 0.8}$.

For each series at the bottom level, we generated $T = 101$ or 501 observations, with the last observation being withheld as the test set. Using the remaining observations as the training set, base forecasts are then generated from the best fitted ARMA (autoregressive moving average) models obtained by minimizing the AICc (corrected Akaike information criterion). We used the default settings in the automated algorithm of \citet{hynkha08} which is implemented in the \texttt{forecast} package for R \citep{hynetalp2020}. The base forecasts are then reconciled using the approaches discussed in Section~\ref{sec:pfrecon}. The whole process is repeated 1000 times. In order to provide a comprehensive analysis, we report the results for bottom-up (BU) forecasts.

We have also considered $T = 101, 301$, and real-roots for the matrices $\bm{A}_1$ and $\bm{A}_2$ for this simulation design. However, to save space, we do not present all the results in this paper. The omitted results follow a similar pattern and are available upon request.

The left panels of Figures~\ref{fig:simcor-1} and \ref{fig:simcor-2} show the percentage relative improvements in MSE for the in-sample reconciled forecasts relative to that for 1-step-ahead fitted values, and the right panels show the improvements from the out-of-sample reconciled forecasts relative to the base forecasts when $T = 101$ and $T = 501$, respectively. A negative (positive) value shows that the MSE of the reconciled forecasts is lower (higher) than that of fitted/base values/forecasts.

\begin{figure}[!htp]
	\includegraphics[width=\textwidth]{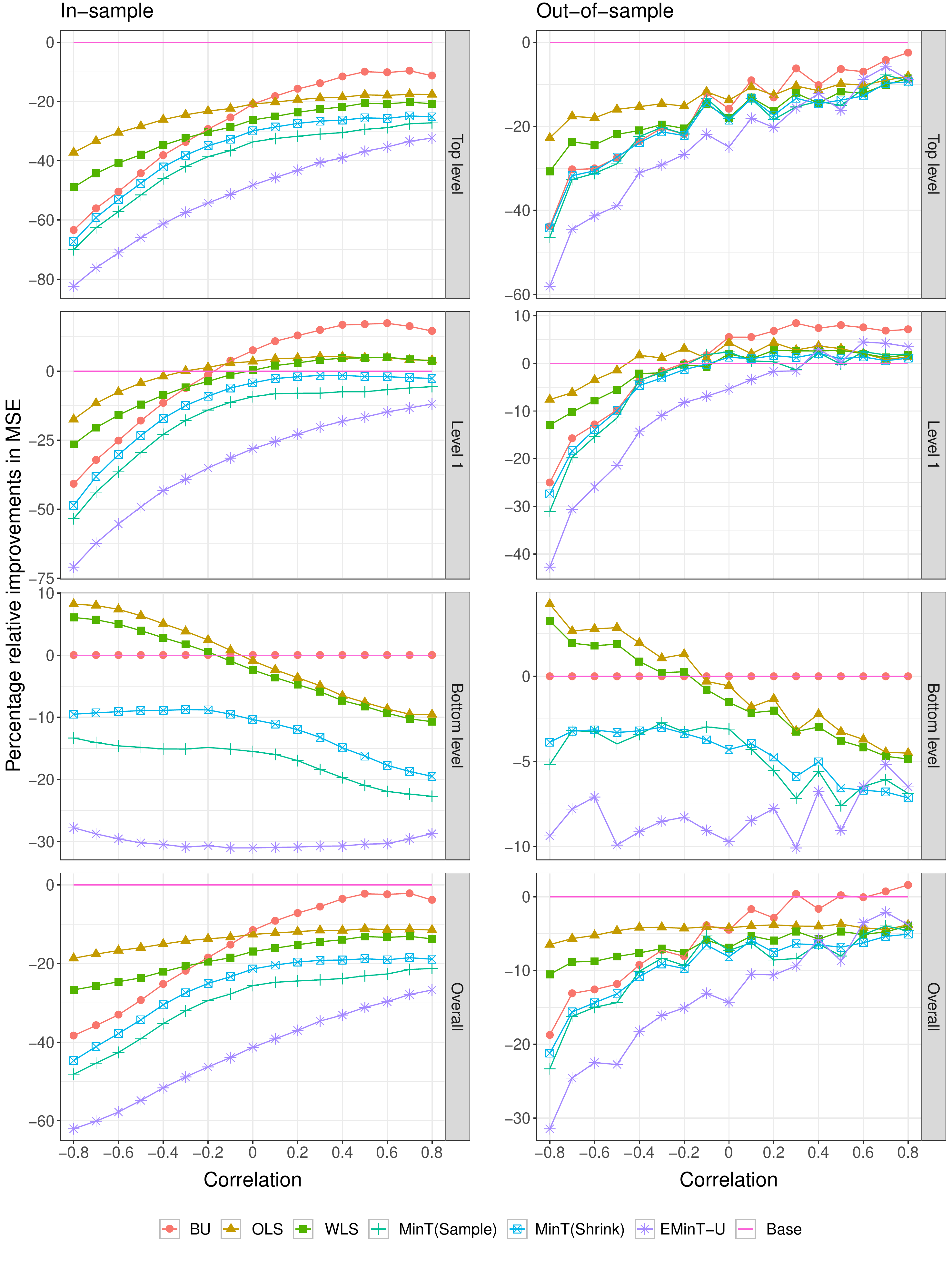}
	\caption{Percentage relative improvements in MSE for the in-sample (shown in the left panel) and out-of-sample (shown in the right panel) evaluations. First three panels show the results for the top level, level 1 and bottom level. The lower panel shows the results for the whole structure. The sample size $T = 101$.}
	\label{fig:simcor-1}
\end{figure}

From Figure~\ref{fig:simcor-1}, it can be clearly seen that the in-sample performance of EMinT-U is greater than that of MinT(Sample) and OLS for all the levels in the structure and error correlations. This is in accordance with Theorem~\ref{thm:bounds} and \ref{thm:mintuvsmint} (where the expectations are replaced by in-sample counterparts), and Remark~\ref{rmk:mintuvsmint}. Specifically for EMinT-U and MinT(Sample), the percentage relative improvements are always negative in each panel, whereas for OLS, it holds only in the last panel. The in-sample MSE of MinT(Sample) is smaller than that of MinT(Shrink). This result is intuitive as MinT(Shrink) uses a shrinkage estimator of the sample covariance matrix. Among the reconciliation methods which use a diagonal covariance matrix, WLS performs better than OLS. This observation is also apparent because the series may have different forecast error variances, and accounting for this in reconciliation could be beneficial. The bottom-up method deteriorates the performances than WLS and OLS for positive error correlations. In general, the overall improvements in all forecast reconciliation approaches have decreased as the error correlation increases from $-0.8$ to 0.8.

\begin{figure}[!htp]
	\includegraphics[width=\textwidth]{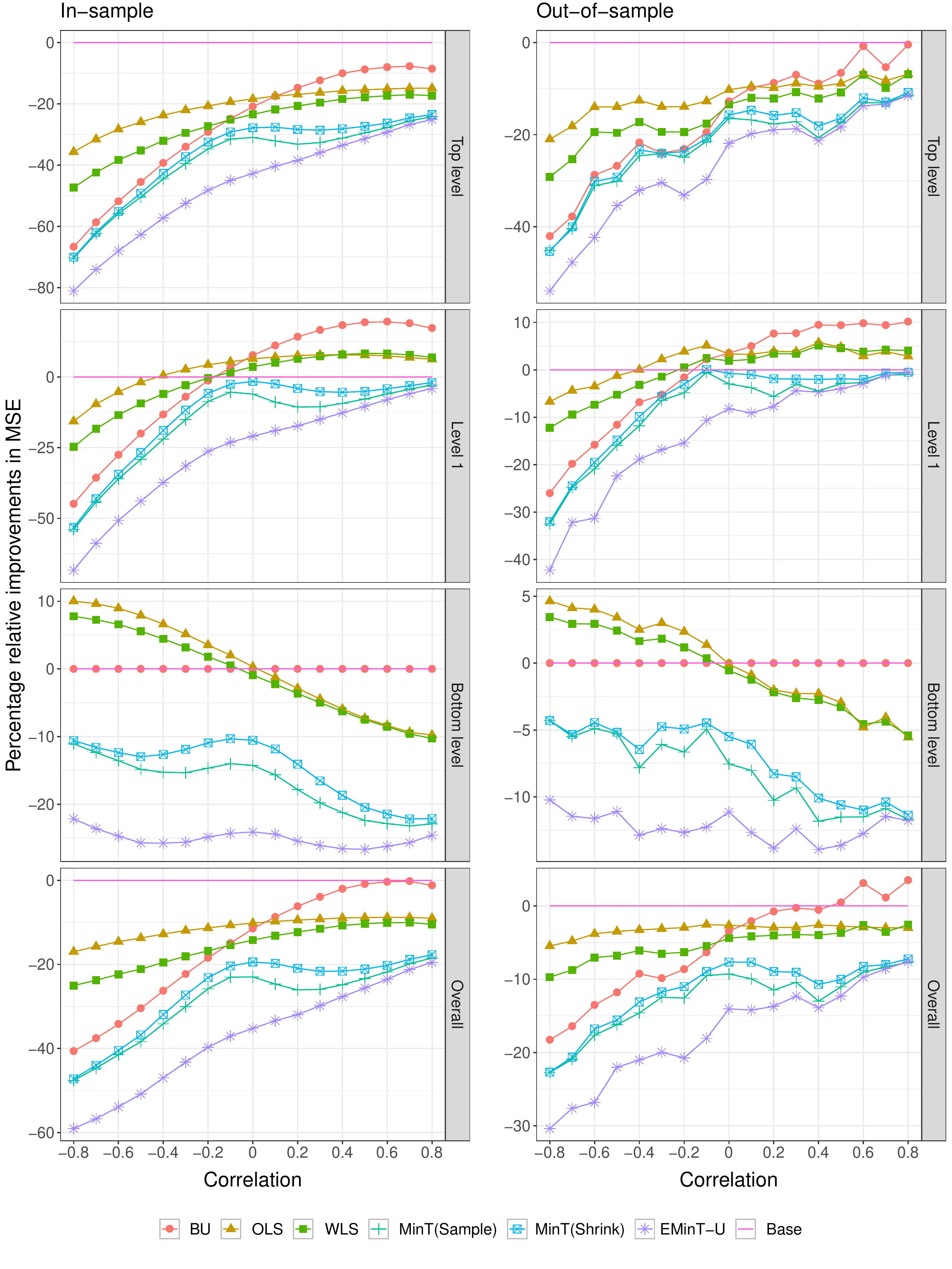}
	\caption{Percentage relative improvements in MSE for the in-sample (shown in the left panel) and out-of-sample (shown in the right panel) evaluations. First three panels show the results for the top level, level 1 and bottom level. The lower panel shows the results for the whole structure. The sample size $T = 501$.}
	\label{fig:simcor-2}
\end{figure}

Some patterns we observed in in-sample evaluations are also present in out-of-sample evaluations for negative error correlations. However, the ordering of performances for EMinT-U, MinT(Sample) and MinT(Shrink) have twisted slightly for positive error correlations. Figure~\ref{fig:simcor-2} shows a similar performance comparison for a larger training set. We can observe that all the patterns we noted previously in in-sample evaluations are extended to out-of-sample evaluations.

\begin{figure}[!htp]
	\includegraphics[width=\textwidth]{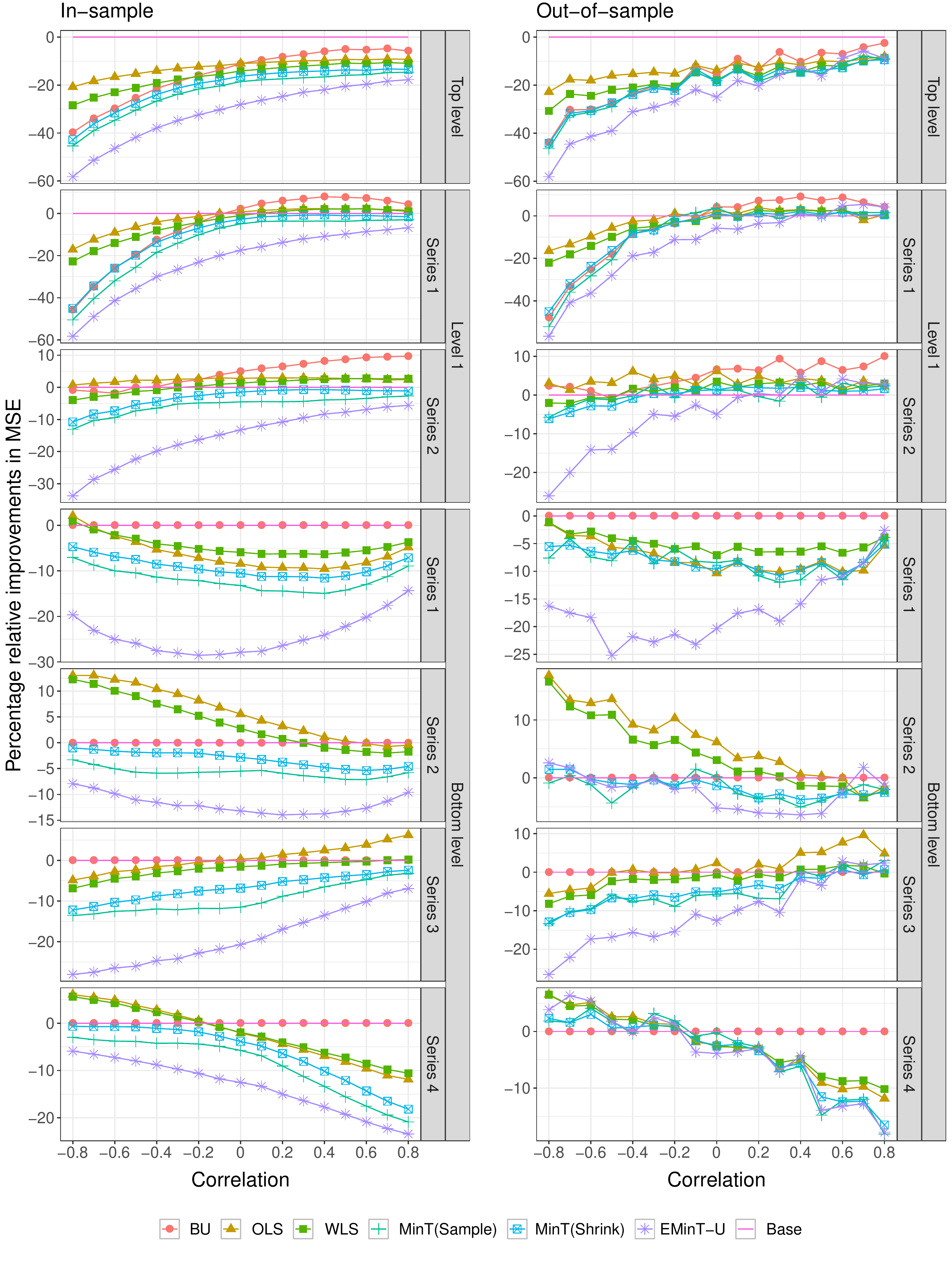}
	\caption{Percentage relative improvements in MSE for the in-sample (shown in the left panel) and out-of-sample (shown in the right panel) evaluations of each series in the structure. The panels from top to bottom are organized based on the level to which each series belongs. The sample size $T = 101$.}
	\label{fig:simcorind-1}
\end{figure}

\begin{figure}[!htp]
	\includegraphics[width=\textwidth]{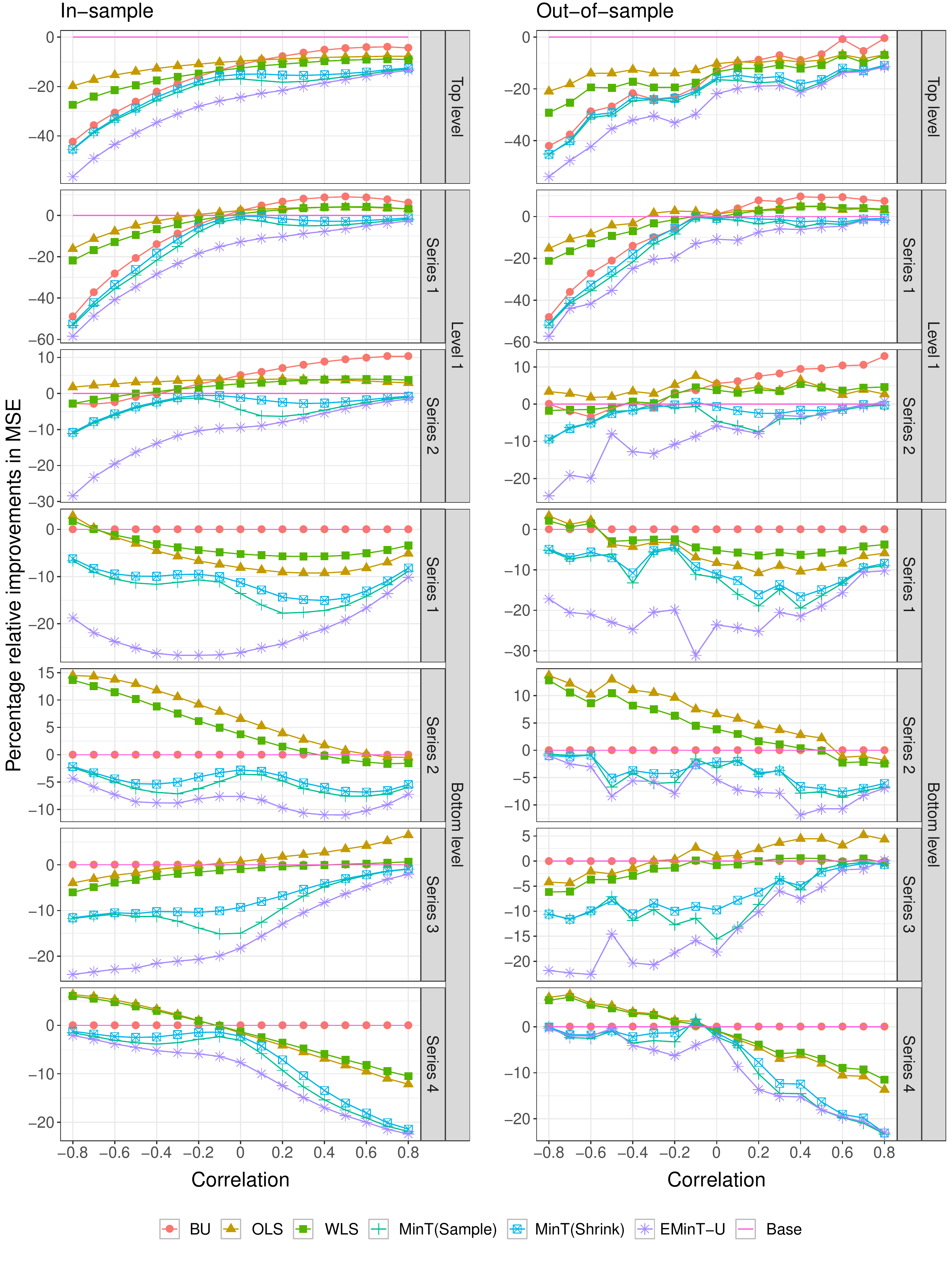}
	\caption{Percentage relative improvements in MSE for the in-sample (shown in the left panel) and out-of-sample (shown in the right panel) evaluations of each series in the structure. The panels from top to bottom are organized based on the level to which each series belongs. The sample size $T = 501$.}
	\label{fig:simcorind-2}
\end{figure}

The left panels of Figures~\ref{fig:simcorind-1} and \ref{fig:simcorind-2} depict the percentage relative improvements in MSE for the in-sample reconciled forecasts relative to that for 1-step-ahead fitted values, and the right panels depict the improvements from the out-of-sample reconciled forecasts relative to the base forecasts for each series in the structure. A negative (positive) value shows that the MSE of the reconciled forecasts is lower (higher) than that of fitted/base values/forecasts. It can be observed that the in-sample performance obeys the following order: EMinT-U, MinT(Sample)/MinT(Shrink) and OLS (arranged in descending order of performance) in each panel. For EMinT-U and MinT(Sample)/MinT(Shrink), the improvement is always larger than the base forecasts. These observations are in accordance with the findings in Theorem~\ref{thm:bounds} and \ref{thm:mintuvsmint}. The out-of-sample performances for $T= 101$ follow similar patterns as noted above, with exceptions occurring at few bottom-level series or/and when the error correlation is positive. However, the patterns become more apparent when the sample size increases to $T = 501$.

\subsection{Exploring the effect of correlation on a larger hierarchy}
\label{sec:simcorlarge}

In this simulation design, we consider a slightly larger hierarchy. The structure consists of two-levels and 43 series in total. There are 36 series at the bottom level and are aggregated in groups of size six to form six series at level 1, which are then aggregated to form the total series. Similarly to the previous simulation setup, we assumed a VAR(1) process to generate the observations at the bottom level. The coefficient matrix of the VAR(1) process is shown in the first panel of Figure~\ref{fig:coefandcor}.

We considered two representations for the correlation matrix of the Gaussian innovation process: \begin{inparaenum}[(a)] \item all the correlations are non-negative; \item allows a mixture of positive and negative correlations. \end{inparaenum} Specifically, a compound symmetric correlation matrix is used for each block of size six at the bottom level. The correlation coefficient for each block is generated from a uniform distribution on the interval (0.2, 0.7). Algorithm 1 of \citet{haretal13} is used to impose correlations between different blocks. The resulting correlation matrix is shown in the second panel of Figure~\ref{fig:coefandcor}. Finally, the covariance matrix is constructed by sampling the standard deviations from a uniform distribution on the interval $(\sqrt{2}, \sqrt{6})$. Some of these covariances are converted into negatives to allow for a mixture of positive and negative correlations and is shown in the bottom panel of Figure~\ref{fig:coefandcor}.

\begin{figure}[!ht]
	\centering
	\includegraphics[width=0.55\textwidth, keepaspectratio=TRUE]{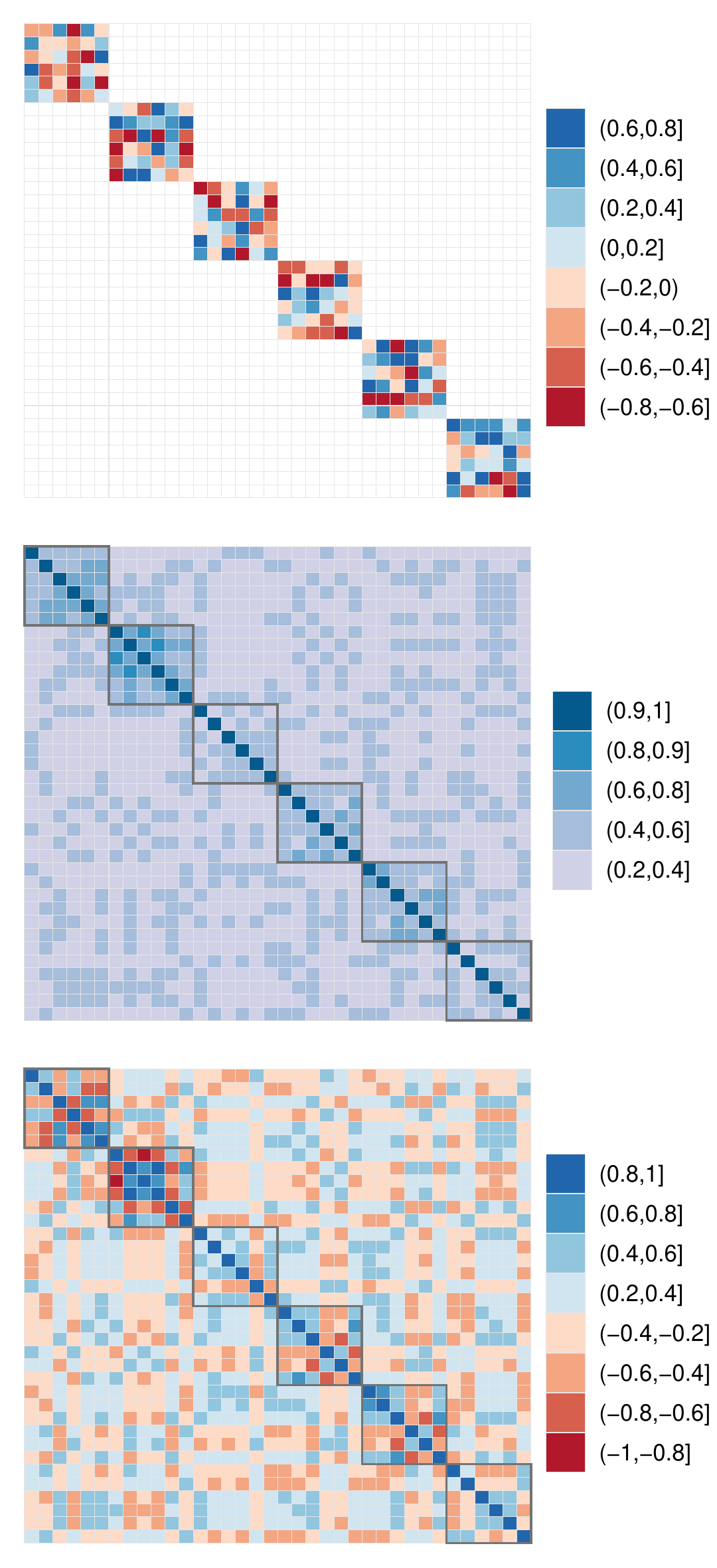}
	\caption{Coefficient matrix of the VAR(1) model (shown in the first panel), a matrix with non-negative correlations (shown in the second panel) and a matrix with positive and negative correlations (shown in the third panel).}
	\label{fig:coefandcor}
\end{figure}

For each series, $T = 101, 301$ or 501 observations are generated, with the last observation being withheld as a test set. Using the remaining observations as a training set, base forecasts are then computed from the best fitted ARMA models, which minimize AICc. These are then reconciled using different reconciliation methods. The process is repeated 1000 times.

The results for the non-negative, and a mixture of positive and negative error correlations are summarized in the left and right panels of Table~\ref{tbl:prial-large-structure}, respectively. Each entry in the table shows the percentage improvement in the MSE for the reconciled forecasts relative to the fitted/base values/forecasts. A negative (positive) entry shows a decrease (increase) in MSE of reconciled forecasts relative to that of the fitted/base values/forecasts. The bold entries identify the best performing methods.

The in-sample performance of forecast reconciliation approaches can be arranged in the following order based on the increasing MSE: EMinT-U, MinT(Sample), MinT(Shrink), WLS and OLS for the two choices of correlation matrices, all sample sizes and all the levels in the structure. For EMinT-U, MinT(Sample) and MinT(Shrink), the improvements are always negative. These observations are in accordance with the findings from Section~\ref{sec:simcor} and the theoretical results given in Section~\ref{sec:newtheory}. A similar pattern is observed in out-of-sample evaluations for $T = 301$ and 501. The out-of-sample performance of EMinT-U has decreased significantly when the non-negative error correlations are present and the sample size is 101. In contrast, MinT(Sample) and MinT(Shrink) perform the best. This behavior is also noted in Section~\ref{sec:simcor}. However, EMinT-U shows comparative results to MinT(Sample) and MinT(Shrink) when both positive and negative error correlations are present. Therefore, it can be noted that EMinT-U can perform poorly when all the correlations are positive, and it needs a larger sample size to perform well as expected.

\begin{landscape}
	\begin{table}[!htp]
		\caption{\label{tbl:prial-large-structure} Percentage relative improvements in MSE of forecast reconciliation methods for a larger structure.}
		\centering
		\fontsize{9}{12}\rm\tabcolsep=0.07cm
		\begin{tabular}{lrrrrrrrrrrrrrrrrrrr}
			\toprule
			& \multicolumn{9}{c}{Non-negative error correlations} & & \multicolumn{9}{c}{Positive and negative error correlations}\\
			\cmidrule{2-10} \cmidrule{12-20}
			& \multicolumn{4}{c}{In-sample} & & \multicolumn{4}{c}{Out-of-sample}  & & \multicolumn{4}{c}{In-sample} & & \multicolumn{4}{c}{Out-of-sample} \\
			\cmidrule{2-5} \cmidrule{7-10} \cmidrule{12-15} \cmidrule{17-20}
			& \multicolumn{1}{c}{Top} & \multicolumn{1}{c}{Level 1} & \multicolumn{1}{c}{Bottom} & \multicolumn{1}{c}{Overall} & & \multicolumn{1}{c}{Top} & \multicolumn{1}{c}{Level 1} & \multicolumn{1}{c}{Bottom} & \multicolumn{1}{c}{Overall} & & \multicolumn{1}{c}{Top} & \multicolumn{1}{c}{Level 1} & \multicolumn{1}{c}{Bottom} & \multicolumn{1}{c}{Overall} & & \multicolumn{1}{c}{Top} & \multicolumn{1}{c}{Level 1} & \multicolumn{1}{c}{Bottom} & \multicolumn{1}{c}{Overall} \\
			\cmidrule{2-5} \cmidrule{7-10} \cmidrule{12-15} \cmidrule{17-20}
			& \multicolumn{9}{c}{$T = 101$} & & \multicolumn{9}{c}{$T = 101$} \\
			\cmidrule{2-10} \cmidrule{12-20}
			BU & $-41.5$ & $44.9$ & $0.0$ & $-38.2$ & & $-18.2$ & $20.3$ & $0.0$ & $-4.6$ & & $-68.3$ & $13.1$ & $0.0$ & $-62.3$ & & $-46.5$ & $12.8$ & $0.0$ & $-17.4$ \\
			OLS & $-19.3$ & $14.8$ & $-7.4$ & $-18.0$ & & $-9.6$ & $6.3$ & $-3.7$ & $-4.3$ & & $-31.5$ & $6.9$ & $-1.5$ & $-28.7$ & & $-18.3$ & $7.4$ & $-0.8$ & $-6.4$ \\
			WLS & $-51.3$ & $9.8$ & $-10.1$ & $-48.9$ & & $-26.2$ & $4.8$ & $-4.8$ & $-14.0$ & & $-79.8$ & $-24.1$ & $-6.7$ & $-75.4$ & & $-57.6$ & $-8.1$ & $-3.5$ & $-28.8$ \\
			MinT(Sample) & $-62.3$ & $-18.0$ & $-40.0$ & $-60.6$ & & $-25.5$ & $3.4$ & $-11.1$ & $-15.1$ & & $-87.1$ & $-46.7$ & $-41.4$ & $-84.0$ & & $-61.7$ & $-12.5$ & $-12.1$ & $-34.3$ \\
			MinT(Shrink) & $-57.7$ & $-6.5$ & $-28.2$ & $-55.7$ & & $\pmb{-28.0}$ & $\pmb{0.9}$ & $\pmb{-11.2}$ & $\pmb{-17.3}$ & & $-85.4$ & $-38.9$ & $-31.0$ & $-81.7$ & & $\pmb{-62.5}$ & $-13.6$ & $-12.2$ & $-35.0$ \\
			EMinT-U & $\pmb{-87.9}$ & $\pmb{-77.1}$ & $\pmb{-89.9}$ & $\pmb{-87.5}$ & & $4.1$ & $32.3$ & $-11.1$ & $9.2$ & & $\pmb{-97.7}$ & $\pmb{-92.5}$ & $\pmb{-92.3}$ & $\pmb{-97.3}$ & & $-57.6$ & $\pmb{-19.0}$ & $\pmb{-20.8}$ & $\pmb{-36.7}$ \\
			\cmidrule{2-5} \cmidrule{7-10} \cmidrule{12-15} \cmidrule{17-20}
			& \multicolumn{9}{c}{$T = 301$} & & \multicolumn{9}{c}{$T = 301$} \\
			\cmidrule{2-10} \cmidrule{12-20}
			BU & $-38.1$ & $49.5$ & $0.0$ & $-34.7$ & & $-16.9$ & $24.6$ & $0.0$ & $-3.0$ & & $-69.5$ & $14.1$ & $0.0$ & $-63.4$ & & $-41.3$ & $7.8$ & $0.0$ & $-16.2$ \\
			OLS & $-18.0$ & $15.3$ & $-8.0$ & $-16.7$ & & $-9.0$ & $7.9$ & $-4.2$ & $-3.7$ & & $-31.5$ & $7.7$ & $-1.3$ & $-28.6$ & & $-17.0$ & $5.2$ & $-0.4$ & $-6.2$ \\
			WLS & $-47.9$ & $13.4$ & $-10.0$ & $-45.5$ & & $-25.4$ & $8.2$ & $-4.9$ & $-13.0$ & & $-79.9$ & $-22.8$ & $-6.5$ & $-75.4$ & & $-52.9$ & $-10.4$ & $-3.3$ & $-27.3$ \\
			MinT(Sample) & $-56.4$ & $-9.2$ & $-36.5$ & $-54.5$ & & $-29.4$ & $0.2$ & $-17.0$ & $-19.5$ & & $-86.2$ & $-41.9$ & $-38.0$ & $-82.8$ & & $-58.4$ & $-19.8$ & $-17.9$ & $-36.4$ \\
			MinT(Shrink) & $-55.1$ & $-5.9$ & $-32.6$ & $-53.2$ & & $-29.7$ & $0.4$ & $-15.9$ & $-19.3$ & & $-85.8$ & $-39.4$ & $-34.6$ & $-82.2$ & & $-58.5$ & $-19.1$ & $-17.0$ & $-36.0$ \\
			EMinT-U & $\pmb{-73.3}$ & $\pmb{-50.6}$ & $\pmb{-79.6}$ & $\pmb{-72.4}$ & & $\pmb{-30.1}$ & $\pmb{-5.1}$ & $\pmb{-40.7}$ & $\pmb{-25.3}$ & & $\pmb{-95.5}$ & $\pmb{-84.8}$ & $\pmb{-84.7}$ & $\pmb{-94.7}$ & & $\pmb{-71.3}$ & $\pmb{-47.4}$ & $\pmb{-47.3}$ & $\pmb{-58.0}$ \\
			\cmidrule{2-5} \cmidrule{7-10} \cmidrule{12-15} \cmidrule{17-20}
			& \multicolumn{9}{c}{$T = 501$} & & \multicolumn{9}{c}{$T = 501$}\\
			\cmidrule{2-10} \cmidrule{12-20}
			BU & $-37.6$ & $50.3$ & $0.0$ & $-34.2$ & & $-23.5$ & $22.4$ & $0.0$ & $-7.4$ & & $-69.7$ & $14.1$ & $0.0$ & $-63.5$ & & $-41.0$ & $8.4$ & $0.0$ & $-16.2$ \\
			OLS & $-17.8$ & $15.5$ & $-8.1$ & $-16.5$ & & $-9.6$ & $9.1$ & $-3.4$ & $-3.7$ & & $-31.4$ & $7.7$ & $-1.3$ & $-28.5$ & & $-16.6$ & $6.7$ & $-0.3$ & $-5.8$ \\
			WLS & $-47.4$ & $14.1$ & $-10.0$ & $-45.0$ & & $-29.4$ & $7.3$ & $-4.5$ & $-15.5$ & & $-79.8$ & $-22.6$ & $-6.5$ & $-75.3$ & & $-53.8$ & $-9.6$ & $-3.2$ & $-27.7$ \\
			MinT(Sample) & $-55.3$ & $-7.4$ & $-35.8$ & $-53.5$ & & $-35.5$ & $-1.5$ & $-17.5$ & $-23.5$ & & $-86.0$ & $-41.2$ & $-37.4$ & $-82.6$ & & $-60.9$ & $-19.5$ & $-18.7$ & $-37.9$ \\
			MinT(Shrink) & $-54.7$ & $-5.8$ & $-33.6$ & $-52.8$ & & $-35.0$ & $-1.3$ & $-16.8$ & $-23.0$ & & $-85.8$ & $-39.8$ & $-35.5$ & $-82.3$ & & $-61.1$ & $-19.2$ & $-17.9$ & $-37.7$ \\
			EMinT-U & $\pmb{-69.8}$ & $\pmb{-44.2}$ & $\pmb{-77.1}$ & $\pmb{-68.9}$ & & $\pmb{-36.8}$ & $\pmb{-10.1}$ & $\pmb{-43.6}$ & $\pmb{-31.0}$ & & $\pmb{-94.9}$ & $\pmb{-83.0}$ & $\pmb{-83.0}$ & $\pmb{-94.0}$ & & $\pmb{-74.4}$ & $\pmb{-50.0}$ & $\pmb{-50.0}$ & $\pmb{-61.0}$ \\
			\bottomrule
		\end{tabular}
	\end{table}
\end{landscape}

The left panels of Figure~\ref{fig:simlarind-1} show the percentage relative improvements in MSE for the in-sample reconciled forecasts relative to that for 1-step-ahead fitted values, and the right panels show the improvements from the out-of-sample reconciled forecasts relative to that for the base forecasts for each series in the structure when positive and negative error correlations are present. A negative (positive) entry shows a decrease (increase) in MSE of reconciled forecasts relative to that of the fitted/base values/forecasts. The series are arranged based on the performance of EMinT-U. We do not report the results for non-negative error correlations as the conclusions are qualitatively similar and available upon request.

As we noted in Section~\ref{sec:simcor}, the in-sample performance of EMinT-U is the greatest, which is then followed by MinT(Sample)/MinT(Shrink) and OLS for each series in the structure. Except for OLS, the improvements for other methods are always greater than the base forecasts. These patterns are not apparent in out-of-sample evaluations when $T = 101$. There are series for which EMinT-U forecasts are worst than the base forecasts. However, for such series, MinT forecasts have not performed poorly. As the sample size increases to $T = 501$, the in-sample patterns are also visible in out-of-sample evaluations.

\begin{figure}[!htp]
	\includegraphics[width=\textwidth]{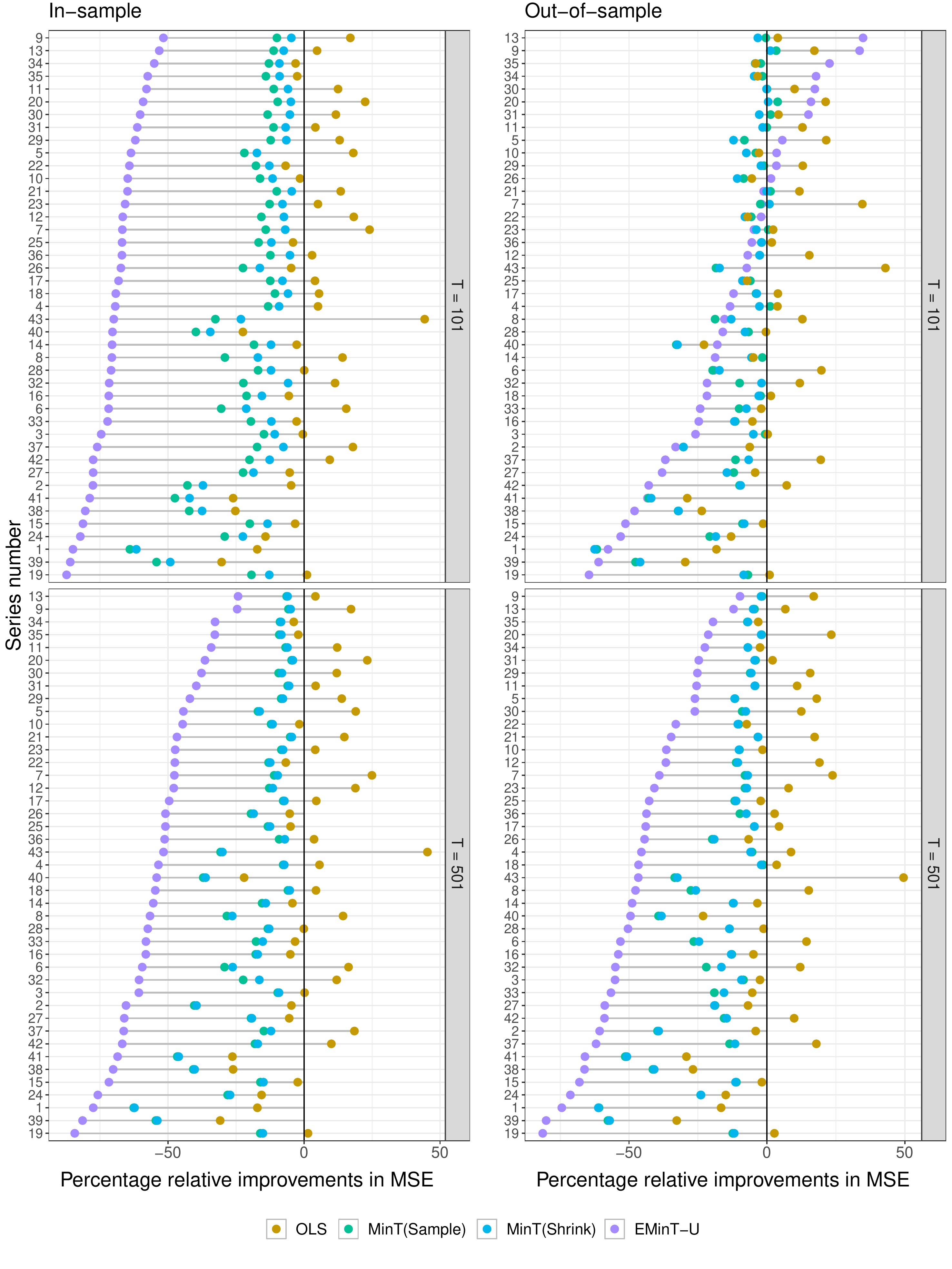}
	\caption{Percentage relative improvements in MSE for the in-sample (shown in the left panel) and out-of-sample (shown in the right panel) evaluations of each series when the contemporaneous error correlations are on the interval $(-1, 1)$. The series are sorted according to the performance of EMinT-U.}
	\label{fig:simlarind-1}
\end{figure}

\section{Application}
\label{sec:application}

For our empirical investigation, we consider Australian domestic tourism flows to build several hierarchical structures. We measure domestic tourism flow using ``visitor nights'', the total number of nights spent by Australians away from home. The data are managed by Tourism Research Australia and are collected through the national visitor survey conducted by computer-assisted telephone interviews. The information is gathered from an annual sample of 120,000 Australian residents aged 15 years or over. The data are monthly time series and span the period from January 1998 to December 2019.

Using the information available, we construct two simple hierarchies. The first hierarchy disaggregates the total visitor nights in Australia by the purpose of travel, whereas the second hierarchy disaggregates it by states and territories in Australia. There are four purposes of travel: holiday, visiting friends and relatives (VFR), business, and other, and seven states and territories: New South Wales (NSW), Victoria (VIC), Queensland (QLD), South Australia (SA), Western Australia (WA), Tasmania (TAS) and Northern Territory (NT).

The top panel of Figure~\ref{fig:vn-timeplots} shows the time plot of visitor nights in Australia. The middle and bottom panels show the time plots for four types of purpose of travel, and seven states and territories in Australia, respectively. We can see that, except for the `other' time series, the rest of the series show strong seasonal patterns. Most of the prominent series show diverse trends.

\begin{figure}
	\centering
	\includegraphics[width=0.9\linewidth]{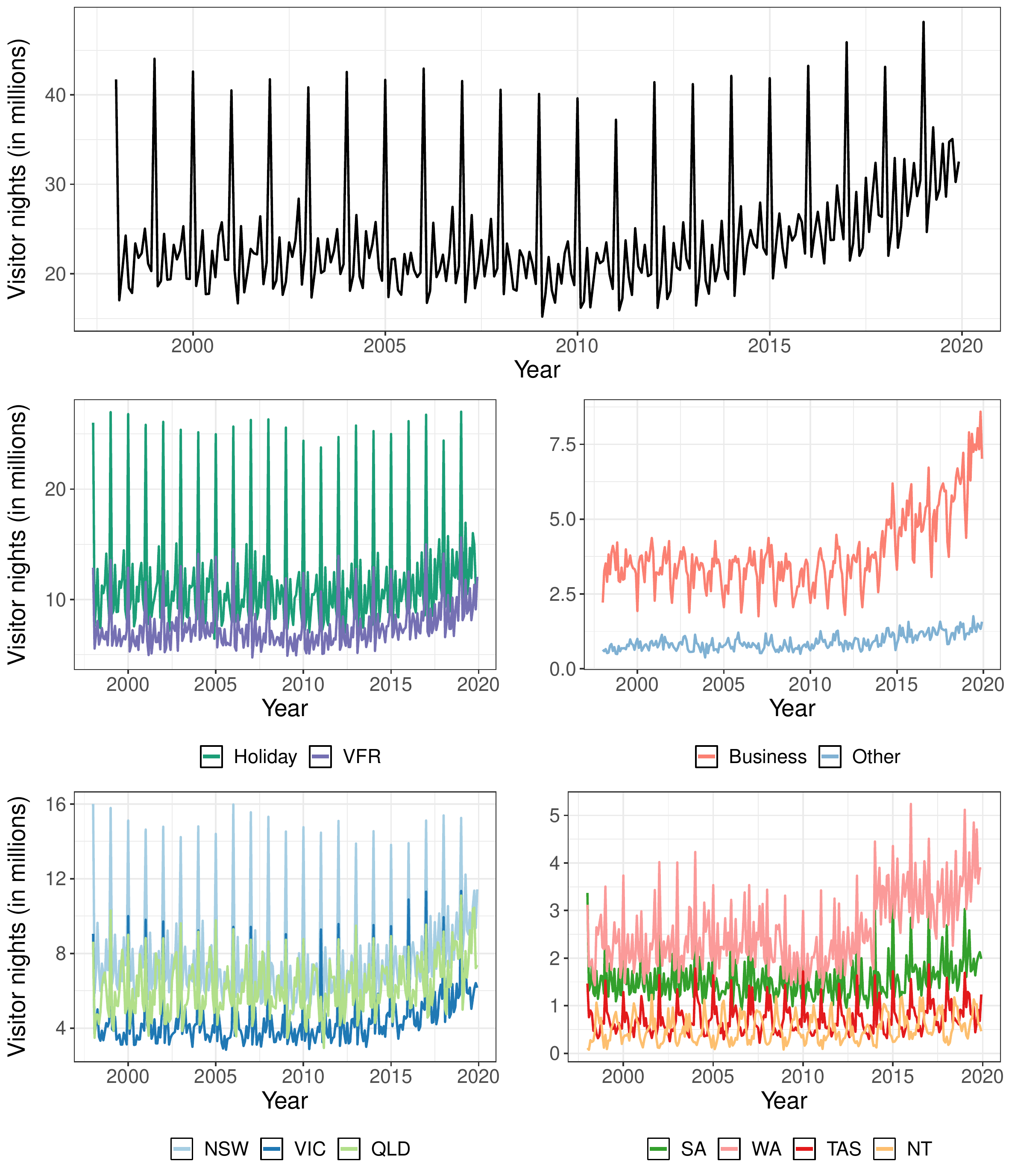}
	\caption{Time plots of visitor nights in Australia. The total number of visitor nights aggregated across purpose of travel or states and territories (shown in the top panel), total number of visitor nights for each of the 4 purposes of travel (shown in the middle panel) and each of the 7 states and territories (shown in the bottom panel).}
	\label{fig:vn-timeplots}
\end{figure}

For each series in each hierarchy we begin with a training set of size 120. In order to evaluate For each series in each hierarchy, we begin with a training set of size 120. To evaluate the forecast accuracy of EMinT-U, we need to ensure that the series in the structure are jointly weakly stationary. First, we remove the seasonal component from each series using the seasonal, trend and irregular decomposition using loess (STL). We assume that the seasonal component is periodic. For each seasonally adjusted series, we perform the Kwiatkowski–Phillips–Schmidt–Shin test to decide whether the series needs non-seasonal differencing or not. If at least one series in the hierarchy needs differencing, we apply the same differencing to all the series in the structure. This ensures that the transformed series satisfy the same aggregation constraints as the original data. For each transformed series, the best fitted seasonal ARMA model is identified by minimizing AICc. Then the base forecasts are produced for 1-step-ahead for each series in the structure. Then the base forecasts are reconciled using the alternative approaches. We roll the training window forward by one observation until November 2019.

\begin{table}[ht]
	\centering
	\caption{Percentage relative improvements in MSE for forecast reconciliation methods for the Australian domestic tourism data set.}
	\label{tbl:prial-1-level}
	\fontsize{9}{12}\rm\tabcolsep=0.05cm
	\begin{tabular}{lrrrrrrrrrrrrrrr}
		\toprule
		& \multicolumn{7}{c}{Hierarchy 1} & & \multicolumn{7}{c}{Hierarchy 2} \\
		\cmidrule{2-8} \cmidrule{10-16}
		& \multicolumn{3}{c}{In-sample} & & \multicolumn{3}{c}{Out-of-sample} & & \multicolumn{3}{c}{In-sample} & & \multicolumn{3}{c}{Out-of-sample} \\
		\cmidrule{2-4} \cmidrule{6-8} \cmidrule{10-12} \cmidrule{14-16}
		& Australia & Purpose & Overall & & Australia & Purpose & Overall & & Australia & States & Overall & & Australia & States & Overall \\
		\cmidrule{2-4} \cmidrule{6-8} \cmidrule{10-12} \cmidrule{14-16}
		BU & $-2.7$ & $0.0$ & $-1.6$ & & $-9.6$ & $0.0$ & $-6.0$ & & $-3.0$ & $0.0$ & $-2.0$ & & $\pmb{-3.0}$ & $\pmb{0.0}$ & $\pmb{-2.0}$\\
		OLS & $-1.4$ & $0.5$ & $-0.7$ & & $-2.5$ & $2.9$ & $-0.5$ & & $-1.0$ & $0.5$ & $-0.5$ & & $-0.9$ & $0.6$ & $-0.4$ \\
		WLS & $-3.1$ & $0.0$ & $-1.8$ & & $-6.5$ & $2.6$ & $-3.1$ & & $-3.4$ & $-0.2$ & $-2.3$ & & $-2.9$ & $\pmb{0.0}$ & $-1.9$ \\
		MinT(Sample) & $-4.6$ & $-1.4$ & $-3.3$ & & $-5.2$ & $2.2$ & $-2.4$ & & $-4.9$ & $-1.5$ & $-3.7$ & & $2.2$ & $3.3$ & $2.5$ \\
		MinT(Shrink) & $-3.8$ & $-0.6$ & $-2.5$ & & $-6.3$ & $2.3$ & $-3.0$ & & $-4.1$ & $-0.7$ & $-2.9$ & & $-1.7$ & $0.7$ & $-0.9$ \\
		EMinT-U & $\pmb{-7.9}$ & $\pmb{-4.7}$ & $\pmb{-6.6}$ & & $\pmb{-10.0}$ & $\pmb{-0.4}$ & $\pmb{-6.4}$ & & $\pmb{-10.0}$ & $\pmb{-6.7}$ & $\pmb{-8.9}$ & & $5.5$ & $6.7$ & $5.9$ \\
		\bottomrule
	\end{tabular}
\end{table}

Table~\ref{tbl:prial-1-level} presents the results of the rolling window forecast evaluation. The left panel shows in-sample and out-of-sample percentage relative improvements in MSE relative to that of the base forecasts for the first hierarchy (disaggregated by the purpose of travel), and the right panel shows that for the second hierarchy (disaggregated by states and territories). A negative (positive) entry shows a decrease (increase) in MSE relative to the base forecasts. The bold entries identify the best performing methods. The in-sample performance of EMinT-U is the best for both the hierarchies, which is followed by MinT(Sample), MinT(Shrink), WLS and OLS (in the increasing order of MSE) for all the levels in the structure. The out-of-sample performance of hierarchy 1 is greatest for EMinT-U, and BU is the second best. One reason for BU to perform well could be the high signal-to-noise ratio present at the bottom level. For hierarchy 2, BU marginally outperforms WLS and could be due to the same reason noted above. The performance of EMinT-U is worst than MinT(Sample).

Tables~\ref{tbl:in-hier1} and \ref{tbl:in-hier2} summarizes the percentage relative improvements in MSE for forecast reconciliation methods relative to that of the base forecasts for each series in hierarchy 1 and 2, respectively. The analysis is performed separately for in-sample (shown in the left panel) and out-of-sample (shown in the right panel). The in-sample evaluations revealed that EMinT-U is the best and MinT(Sample) is the second best for all the series in both structures. The out-of-sample evaluation for hierarchy 1 has only three series out of five with the best performances, whereas none of the series in hierarchy 2 show improvements for EMinT-U. The decrease in performance for EMinT-U can be due to several reasons: \begin{inparaenum}[(a)]\item EMinT-U estimates more parameters than MinT(Sample) (i.e., $n \times m = 56$ (EMinT-U) as opposed to $n \times m^* = 7$ (MinT)); \item for positive error correlations we need a larger sample size to show better performances \end{inparaenum}. Figure~\ref{fig:correlation-states} shows the 1-step-ahead in-sample correlation matrix from the last iteration of the rolling window forecast evaluation. We can observe that most of the correlations are positive and varies from weak to moderate in strength. Therefore, we can suspect that the positive error correlation might be one of the factors for the worst performance in EMinT-U. The first point that we noted above was also observed by \citet{benkoo19}, and they proposed to consider regularized approaches to reduce the number of parameters that need to be estimated.

\begin{table}[ht]
	\centering
	\caption{Percentage relative improvements in MSE for each series in hierarchy 1.}
	\label{tbl:in-hier1}
	\fontsize{9}{12}\rm\tabcolsep=0.07cm
	\begin{tabular}{lrrrrrrrrrrr}
		\toprule
		& \multicolumn{5}{c}{In-sample} & & \multicolumn{5}{c}{Out-of-sample}\\
		\cmidrule{2-6} \cmidrule{8-12}
		& Australia & Holiday & VFR & Business & Other & & Australia & Holiday & VFR & Business & Other \\
		\cmidrule{2-6} \cmidrule{8-12}
		BU & $-2.7$ & 0.0 & 0.0 & 0.0 & 0.0 & & $-9.6$ & 0.0 & \pmb{0.0} & 0.0 & \pmb{0.0} \\
		OLS & $-1.4$ & 0.2 & 0.2 & $-0.2$ & 15.8 & & $-2.5$ & 2.9 & 3.9 & $-0.9$ & 21.3 \\
		WLS & $-3.1$ & 0.3 & $-0.2$ & $-0.5$ & 0.0 & & $-6.5$ & 3.4 & 2.6 & $-0.5$ & 0.4 \\
		MinT(Sample) & $-4.6$ & $-1.4$ & $-1.3$ & $-1.2$ & $-1.9$ & & $-5.2$ & 1.3 & 6.1 & $-1.1$ & 3.4 \\
		MinT(Shrink) & $-3.8$ & $-0.5$ & $-0.8$ & $-0.9$ & $-0.9$ & & $-6.3$ & 2.6 & 3.5 & $-0.8$ & 1.0 \\
		EMinT-U & $\pmb{-7.9}$ & $\pmb{-4.0}$ & $\pmb{-5.7}$ & $\pmb{-5.9}$ & $\pmb{-5.0}$ & & $\pmb{-10.0}$ & $\pmb{-0.5}$ & 1.7 & $\pmb{-4.2}$ & 6.7 \\
		\bottomrule
	\end{tabular}
\end{table}

\begin{table}[ht]
	\centering
	\caption{Percentage relative improvements in MSE for each series in hierarchy 2.}
	\label{tbl:in-hier2}
	\fontsize{9}{12}\rm\tabcolsep=0.07cm
	\begin{tabular}{lrrrrrrrrrrrrrrrrr}
		\toprule
		& \multicolumn{8}{c}{In-sample} & & \multicolumn{8}{c}{Out-of-sample}\\
		\cmidrule{2-9} \cmidrule{11-18}
		& Australia & NSW & VIC & QLD & SA & WA & TAS & NT & & Australia & NSW & VIC & QLD & SA & WA & TAS & NT \\
		\cmidrule{2-9} \cmidrule{11-18}
		BU & $-3.0$ & 0.0 & 0.0 & 0.0 & 0.0 & 0.0 & 0.0 & 0.0 & & $\pmb{-3.0}$ & \pmb{0.0} & \pmb{0.0} & 0.0 & \pmb{0.0} & 0.0 & \pmb{0.0} & 0.0 \\
		OLS & $-1.0$ & 0.3 & 1.1 & $-0.5$ & 7.3 & $-0.5$ & 6.1 & 7.7 & & $-0.9$ & 0.3 & 0.6 & $-0.2$ & 5.7 & 0.2 & 12.5 & 5.5 \\
		WLS & $-3.4$ & 0.1 & 0.0 & $-0.4$ & 0.2 & $-0.5$ & $-0.2$ & 0.0 & & $-2.9$ & 0.4 & \pmb{0.0} & $\pmb{-0.3}$ & 0.1 & $\pmb{-0.2}$ & 0.2 & $\pmb{-0.1}$ \\
		MinT(Sample) & $-4.9$ & $-1.0$ & $-2.8$ & $-1.5$ & $-1.5$ & $-1.5$ & $-1.2$ & $-0.6$ & & 2.2 & 4.3 & 5.5 & 2.1 & 3.8 & 1.6 & 0.8 & 2.3 \\
		MinT(Shrink) & $-4.1$ & $-0.3$ & $-1.2$ & $-0.9$ & $-0.5$ & $-1.0$ & $-0.6$ & $-0.2$ & & $-1.7$ & 1.1 & 1.2 & 0.3 & 0.7 & 0.0 & 0.2 & 0.3 \\
		EMinT-U & $\pmb{-10.0}$ & $\pmb{-6.7}$ & $\pmb{-10.2}$ & $\pmb{-6.1}$ & $\pmb{-7.2}$ & $\pmb{-4.5}$ & $\pmb{-6.9}$ & $\pmb{-5.8}$ & & 5.5 & 10.5 & 6.3 & 3.3 & 5.0 & 5.7 & 12.3 & 10.5 \\
		\bottomrule
	\end{tabular}
\end{table}

\begin{figure}[h]
	\centering
	\includegraphics[width=0.54\linewidth]{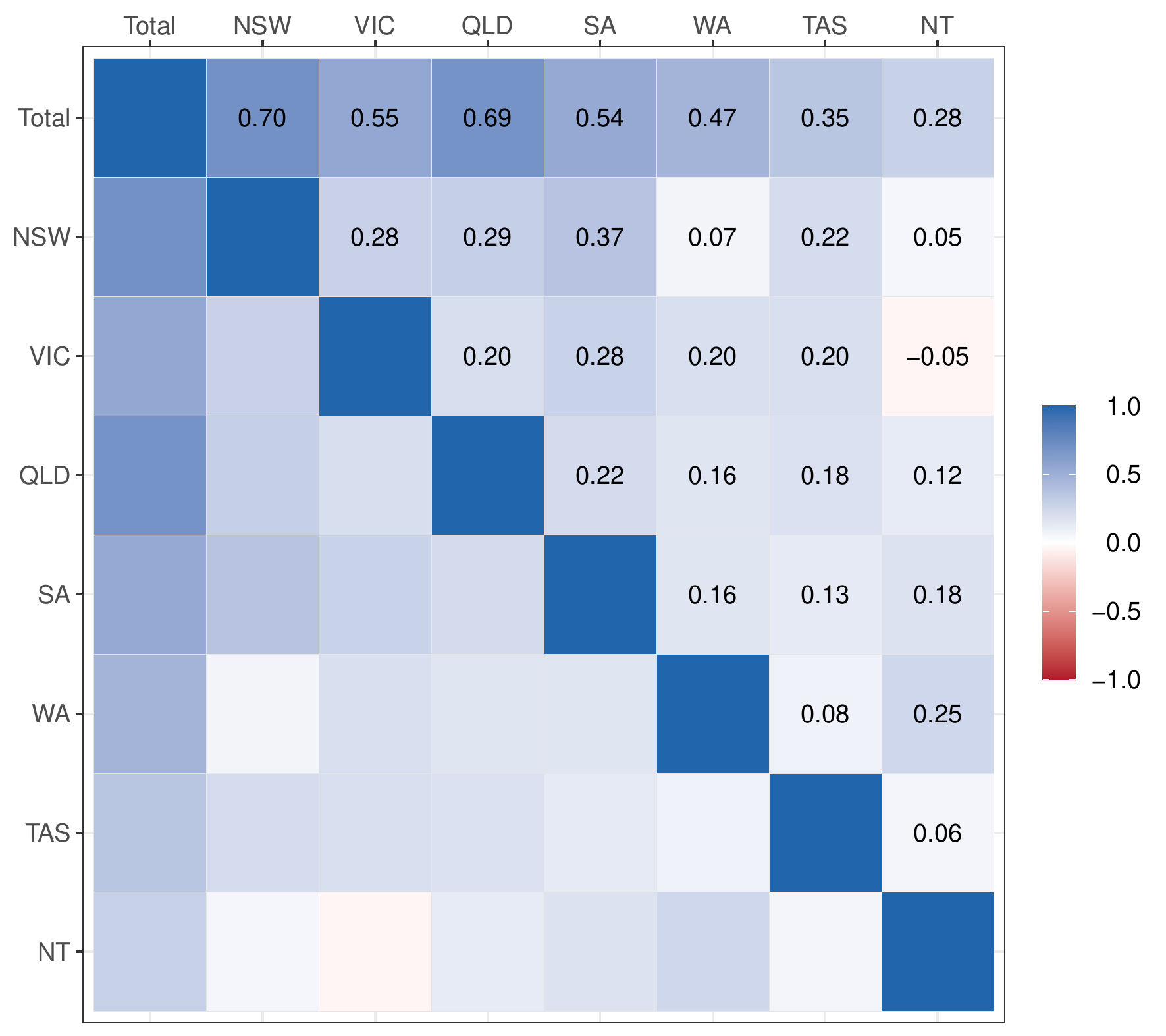}
	\caption{Heatmap of in-sample 1-step-ahead base forecast error correlation matrix of hierarchy 2.}
	\label{fig:correlation-states}
\end{figure}

\section{Conclusion}
\label{sec:conclusion}
This paper aimed to study the properties of point forecast reconciliation methods as there is a growing interest in using these methods among statistical and general scientific communities. A paradigm shift in point forecast reconciliation occurred after the work of \citet{hynetal11}. They introduced a method called GLS, which needs an estimate of the covariance matrix of the coherence errors. They avoided the estimation of the covariance matrix by using the OLS method. Recently, \citet{wicetal19} proposed an alternative method, MinT, which has the same form as GLS, but it needs an estimate of the covariance matrix of the base forecast errors. They noted that the latter covariance matrix could be estimated in practice while the former is not. In this study, we proved that even though these two methods minimize different loss functions and involve different covariance matrices in the final expressions, they both lead to the same solution. We also theoretically showed that, on average, MinT reconciled forecasts improve upon base forecasts (lowest total MSE), and the mean squared error of each series in the structure for MinT is smaller than that for either OLS or base.

We proposed a reconciliation method (MinT-U) similar to \citet{benkoo19} by deviating from the projection matrices. We showed that this method could perform better than MinT, on average, and the mean squared error of each series after applying this method is smaller than that for MinT. Even though this result is promising, the applicability of the method is limited to jointly weakly stationary time series. Despite these restrictions, this method will provide a foundation for researchers to focus also on matrices that are not necessarily projections. We evaluated these methods using simulated and real data. The performance of EMinT-U is impacted by small sample sizes, and hence regularization methods can be applied to reduce the number of parameters that need to be estimated. We leave this to be addressed in a future paper.

\section*{Acknowledgement}
The author greatly appreciates valuable comments and insights from Professor Rob J Hyndman, Professor Thomas Lumley, Associate Professor Ilze Ziedins and Dr\@. Ciprian Giurcaneanu. The author wishes to acknowledge the use of the New Zealand eScience Infrastructure (NeSI) high-performance computing facilities as part of this research. New Zealand's national facilities are provided by NeSI and funded jointly by NeSI's collaborator institutions and through the Ministry of Business, Innovation \& Employment's Research Infrastructure programme. URL \url{https://www.nesi.org.nz}.

\newpage

\printbibliography

\end{document}